\documentclass[10pt]{article}

\usepackage[margin=1in]{geometry}
\usepackage{natbib}
\usepackage{parskip}
\usepackage{amsmath}
\usepackage{algorithm, algorithmicx, algpseudocode}
\usepackage{amssymb}
\usepackage{graphicx}
\usepackage{mathtools}
\usepackage{needspace} 
\usepackage{multicol}
\usepackage{listings}
\usepackage{amsthm}
\usepackage[shortlabels]{enumitem}
\usepackage[italicdiff]{physics}
\usepackage{cancel}
\usepackage[dvipsnames]{xcolor}
\usepackage{verbatim}
\usepackage{comment}
\usepackage{array}
\usepackage{tikz-cd}
\usepackage{import}
\usepackage{booktabs}
\usepackage{comment}
\usepackage{array}
\usepackage[colorlinks, linkcolor=blue, citecolor=ForestGreen]{hyperref}
\usepackage{forest}
\usepackage{float}
\usepackage{xspace}
\usepackage{wrapfig}
\usepackage{thm-restate}
\usepackage{adjustbox}
\usepackage{bm}
\usepackage{bbm}
\usepackage{complexity}
\usepackage{makecell}
\usepackage{nicefrac}
\usepackage{multirow}
\usetikzlibrary{calc, positioning, external, arrows.meta}

\allowdisplaybreaks

\DeclareMathOperator*{\argmin}{argmin}
\DeclareMathOperator*{\argmax}{argmax}

\let\cite\citep

\newfloat{protocol}{tbp}{lop}
\floatname{protocol}{Protocol}

\makeatletter
\renewcommand\thmt@autorefsetup{%
  \ifthmt@isnumbered
  \fi
}
\makeatother

\newtheorem{theorem}{Theorem}
\numberwithin{theorem}{section}
\newtheorem*{theorem*}{Theorem}

\newtheorem{lemma}[theorem]{Lemma}

\newtheorem{remark}[theorem]{Remark}

\theoremstyle{definition}

\usepackage{caption}
\makeatletter
\def\thanks#1{\protected@xdef\@thanks{\@thanks
        \protect\footnotetext{#1}}}
\makeatother

\title{Online Resource Allocation with Replenishable Budgets}
\author{
    Eleonora Fidelia Chiefari \\
    \texttt{eleonorafidelia.chiefari@polimi.it} \\
    Politecnico di Milano 
\and
    Francesco Emanuele Stradi \\
    \texttt{francescoemanuele.stradi@polimi.it} \\
    Politecnico di Milano 
\and
    Alberto Marchesi \\
    \texttt{alberto.marchesi@polimi.it} \\
    Politecnico di Milano
}
\date{\today}

\begin{document}

\maketitle

\begin{abstract}\noindent
  \emph{Online Resource Allocation} (ORA) is a fundamental framework for sequential decision-making problems under budget constraints. Classical ORA models typically assume that resources are monotonic, meaning that selecting actions can only decrease the available budget. In this work, we study a more general setting with \emph{replenishable} budgets, in which actions may either consume or replenish resources over time. This extension is necessary to capture scenarios such as inventory systems or energy markets in which capacity can be actively recovered.
We develop a dual-based algorithm that recovers best-of-both-worlds guarantees for standard ORA when the replenishment factor $\beta = 0$, and improves them when $\beta > 0$. In particular, our algorithm attains $\widetilde{\mathcal O}(\sqrt{T})$ regret in the stochastic setting and $\widetilde{\mathcal O}(\sqrt{T})$ $\alpha$-regret in the adversarial setting, where $\alpha$ depends on the per-round budget and on the replenishment factor of the void action. Moreover, the algorithm ensures strict satisfaction of the budget constraints. 
\end{abstract}

\newpage

\tableofcontents

\newpage

\section{Introduction} 

\emph{Online Resource Allocation} (ORA) \citep{devanur2009the, feldman2010} is a fundamental framework for sequential decision-making under \emph{budget constraints}. At each round, the decision maker observes a request and selects an appropriate action. Crucially, before making this decision, the request fully reveals the reward and resource consumption associated with every possible action \citep{agrawal_nearoptimal, KesselheimRTV13}. This full-information structure distinguishes ORA from most online learning paradigms, in which the learner must act without complete knowledge of the current-round outcomes. Consequently, the main challenge in ORA stems from uncertainty about future requests: learning algorithms must balance the immediate rewards obtained from current requests against the need to preserve resources for future ones \citep{devanur_nearoptimal}.

ORA problems arise in a wide range of applications, including online auctions \citep{balseiro2019learning}, online advertising \citep{mehta} and revenue management \citep{ball2009toward}. Driven by these practical motivations, various works in literature have addressed stochastic ORA problem \citep{li2020, sun2022nearoptimalprimaldualalgorithmsquantitybased}. A few efforts have been made beyond standard i.i.d assumption, considering non-stationary setting \citep{esfandiari} and adversarial setting \citep{balseiro2019learning, stradino}, although under strict assumptions. 
Notably, \citet{balseiro2023best} are the first to consider general ORA problems in both stochastic and adversarial regimes by designing a \emph{best-of-both-world} algorithm. 

Despite their contributions, the vast majority of the existing ORA literature relies on a \emph{monotonic} consumption model: every action taken by the agent depletes the available budget. The budget decreases over time until it is eventually exhausted, and the only action left is a \emph{void} action, which has no effect on the resources. While natural in many classical applications, this assumption excludes a broad class of sequential allocation problems in which resources can also be replenished. In many real-world scenarios, actions may either consume or restore capacity, and the agent may resort to a budget-restoring action when it is running low on budget. Consider, for instance, an inventory system in which a retailer can satisfy customer requests depleting the stock, or place orders from suppliers replenishing the stock. 
Similar dynamics arise in cloud computing, when a server can allocate computational resources to incoming tasks and at the same time terminate tasks to recover memory and CPU capacity. 

In such settings, resource dynamics are inherently \emph{non-monotonic}. The budget oscillates over time, introducing new algorithmic challenges that standard ORA frameworks cannot handle. Classical primal-dual algorithms~\citep{balseiro2023best} rely on monotonic resource depletion to naturally bound the dual variables. When costs can be negative due to replenishments, standard gradient updates risk to artificially decrease the dual multipliers and make the algorithm over-consume resources. 

In this work, we address the following research question:
\begin{quote}
    \centering 
    \emph{How does the replenishment factor affect the achievable regret bounds and the competitive ratio in online resource allocation?}
\end{quote}
We answer this question by proposing a novel best-of-both-worlds primal-dual framework capable of handling non-monotonic resource utilization. We prove that the availability of replenishments can be structurally exploited to improve the agent's performance. Specifically, our proposed algorithm achieves simultaneously sublinear expected regret in stochastic environments and sublinear fractional regret in adversarial ones, while deterministically satisfying the budget constraints.

\subsection{Original Contribution}
In this work, we formally study the online resource allocation problem with replenishable budgets. We propose a dual-based framework capable of handling non-monotonic resource utilization under both stochastic and adversarial assumptions. 
As summarized in Table~\ref{table-comparison}, our algorithm recovers the optimal guarantees of the state-of-the-art \citep{balseiro2023best} for standard monotonic environments ($\beta = 0$) and improves state-of-the-art bounds when replenishment is active ($\beta > 0$). Specifically, we establish the following theoretical guarantees: 
\begin{itemize}
    \item $\widetilde{\mathcal O}(\sqrt{T})$ cumulative expected regret in the \emph{stochastic} setting, i.e., when requests are sampled i.i.d from an unknown fixed distribution. Notably, the bound scales inversely with the augmented margin $\rho + \beta$, yielding a strictly tighter regret guarantee compared to standard monotonic models; 
    \item $\widetilde{\mathcal O}(\sqrt{T})$ cumulative $\alpha$-regret in the \emph{adversarial} setting, when requests are selected in advance by an adversary, with $\alpha \coloneqq \nicefrac{\rho + \beta}{1 + \beta}$ problem-dependent parameter. 
\end{itemize}
The bounds in Table~\ref{table-comparison} use knowledge of $\beta$ to tune the learning rate. At the same time, in both settings, the algorithm strictly satisfies the hard budget constraints over the time horizon. Furthermore, the execution of the algorithm does not require prior knowledge of the true replenishment factor $\beta$. 

\subsection{Technical Challenges}
In classical ORA, primal-dual algorithms rely on the monotonic depletion of resources to naturally bound and stabilize the dual variables \citep{balseiro2023best}. Since resource consumption is always non-negative, the dual update acts as a straightforward penalty mechanism. Specifically, when per-round consumption exceeds $\rho$, the corresponding dual multipliers increase, discouraging the primal algorithm from selecting resource-intensive actions. 

When introducing active replenishment, this monotonic structure breaks down. Negative ``costs'' decrease the corresponding dual multipliers. An action might consume one resource while actively restoring another, pushing the respective dual multipliers in opposite directions. This lack of alignment creates a critical vulnerability in the Lagrangian objective: cross-constraints compensation. An action that severely depletes a critical resource may still appear optimal if it heavily replenishes a different one, masking the violation. Thus, the algorithm risks greedily over-consuming the scarce resource. This mechanism leads to premature budget exhaustion, forcing the agent to play the void action until the budget is recovered. 

Overcoming this limitation requires a dual-space approach that tolerates non-monotonic consumption and unaligned constraints. In this paper, we tackle this challenge by designing a best-of-both-worlds dual-based framework that prevents the artificial deflation of lagrangian multipliers, keeping them tightly bounded even in the presence of negative costs. 

\begin{table*}[t!]
    \caption{Comparison of theoretical guarantees, with known $\beta$ for our bounds. Exposing the dependence on the budget margin ($\rho$) and the replenishment margin ($\beta$) highlights how our framework not only accommodates non-monotonic ORA ($\beta > 0$), but actively exploits replenishments to tighten the regret bounds and significantly improve the adversarial competitive ratio. \\}
    \label{table-comparison}
    \centering
    \small
    {\renewcommand{\arraystretch}{1.6}
        \setlength{\tabcolsep}{12pt}
        \begin{tabular}{llcc}
            \toprule
             \textbf{Metric} 
             & \textbf{Margin} 
             & \textbf{\citet{balseiro2023best}} 
             & \textbf{Our Work} \\
            \midrule
            
            \multirow{2}{*}{Stochastic regret ($R_T$)} 
             & $\bm{\beta = 0}$ 
             & $\widetilde{\mathcal{O}}\left(\frac{1}{\rho}\sqrt{T}\right)$ 
             & $\widetilde{\mathcal{O}}\left(\frac{1}{\rho}\sqrt{T}\right)$ \\
             
             & $\bm{\beta > 0}$ 
             & -- 
             & $\widetilde{\mathcal{O}}\left(\frac{1}{\rho + \beta}\sqrt{T}\right)$ \\
            
            \midrule
             
            \multirow{2}{*}{Adversarial fractional regret ($\alpha$-$R_T$)} 
             & $\bm{\beta = 0}$ 
             & $\widetilde{\mathcal{O}}\left(\frac{1}{\rho}\sqrt{T}\right)$ 
             & $\widetilde{\mathcal{O}}\left(\frac{1}{\rho}\sqrt{T}\right)$ \\
             
             & $\bm{\beta > 0}$ 
             & -- 
             & $\widetilde{\mathcal{O}}\left(\frac{1}{\rho + \beta}\sqrt{T}\right)$ \\
            
            \midrule
             
            \multirow{2}{*}{Competitive Ratio ($\alpha$)} 
             & $\bm{\beta = 0}$ 
             & $\rho$ 
             & $\rho$ 
             \cr
             & $\bm{\beta > 0}$ 
             & -- 
             & $\frac{\rho + \beta}{1 + \beta}$ \\
             
            \bottomrule
        \end{tabular}
    }
\end{table*}

\subsection{Related works} 
\paragraph{Online Resource Allocation} Classical Online Resource Allocation framework has been studied in many real-world scenarios, such as online auctions~\citep{zhou,balseiro2019learning}, online advertising~\citep{mehta}, and revenue management~\citep{talluri2004theory, ball2009toward}. Most of the existing literature studied ORA under stochastic feedback. Early two-phases approaches~\citep{devanur2009the, feldman2010} estimated static dual prices from an initial data sample, leading to suboptimal regret bounds of order $\mathcal{O}(T^{2/3})$. To overcome this limitation, subsequent works~\citep{agrawal_nearoptimal, KesselheimRTV13, devanur_nearoptimal} dynamically updated decisions by periodically re-solving linear programming relaxations, securing the optimal $\widetilde{\mathcal{O}}(\sqrt{T})$. To avoid the computational complexity of repeated LP solvers, recent studies have shifted toward gradient-based primal-dual schemes~\citep{li2020, sun2022nearoptimalprimaldualalgorithmsquantitybased}. Although these methods attain strong theoretical guarantees, they require specific assumptions or problem structures.

When relaxing the i.i.d. assumption, the literature branches into non-stationary environments~\citep{Ciocan2011DynamicAP, esfandiari} and fully adversarial models. Early results in the adversarial setting study the specific problem of online matching and AdWords~\citep{buchbinder, mehta}, without a direct application to more general settings. The first \emph{best-of-both-worlds} algorithm for Online Resource Allocation is the one presented in \citep{balseiro2023best}, which is a dual-based approach simultaneously $\widetilde{\mathcal{O}}(\sqrt{T})$ regret in the stochastic setting,and $\widetilde{\mathcal{O}}(\sqrt{T})$ $\alpha$-regret in the non-stationary and in the adversarial setting. \cite{chiefari2026onlineresourceallocationgeneral} study a generalized ORA framework, finding best-of-both-worlds guarantees incorporating general long-term constraints alongside standard budgets. The algorithm proposed in this work attains $\widetilde{\mathcal{O}}(\sqrt{T})$ regret in the stochastic setting, $\widetilde{\mathcal{O}}(\sqrt{T})$ $\alpha$-regret in the adversarial setting and $\widetilde{\mathcal{O}}(\sqrt{T})$ cumulative violation of the general constraints in both settings.

Parallel analytical efforts have been dedicated to more specific domains. \citep{balseiro2019learning} propose a pacing-based strategy attaining $\widetilde{\mathcal{O}}(\sqrt{T})$ regret in the stochastic setting under suitable assumptions, and provides an impossibility result for the adversarial case: no algorithm can exceed the competitive ratio $\nicefrac{v}{\rho}$, with $v$ denoting an upper bound on the bidder's per-item value. A later work~\citep{feng2023onlinebiddingalgorithmsreturnonspend} extends the online bidding literature by including the Return-on-Spend constraint, which ensures that the ratio of value to expenditure remains over a prescribed threshold. Although it obtains strong guarantees, this analysis relies on the specific structure of the RoS constraints, thus cannot be generalized. Beyond these bidding-specific results, other contributions have pursued competitive-ratio guarantees directly under budget constraints. For instance, in the single-resource online Knapsack setting, \citep{zhou} establish competitive ratios depending on the range of the value-to-weight ratio of the items. 

Despite their theoretical depth, all the aforementioned frameworks inherently rely on a strictly monotonic budget depletion model. Our work fundamentally diverges from this paradigm. By modeling actions that may yield negative costs, we introduce a non-monotonic budget dynamic through active replenishments, requiring a self-regulating mechanism in the dual space. This concept is detailed in Section~\ref{lambda}.

\paragraph{Online Learning with Replenishment} 
The idea that resource consumption need not be monotonic has surfaced only recently in the online learning literature, and almost exclusively within the bandit feedback model. The first step in this direction is due to \citep{kumar2022nonmonotonicresourceutilizationbandits}, who introduced the \emph{Bandits with Replenishable Knapsacks} (BwRK) framework. Their work addresses non-monotonic resource utilization by providing instance-dependent logarithmic bounds specifically tailored for stochastic input models, extending the classic BwK framework \citep{Badanidiyuru_2013, castiglioni2022online}. Building on this formulation, \citep{bernasconi2024bandits} proposed a primal-dual method with best-of-both-worlds guarantees for the replenishable setting. Their algorithm yields a $\widetilde{\mathcal{O}}(\sqrt{T})$ regret bound in the stochastic setting, matching previous results, and ensures a constant competitive ratio in the adversarial setting. 

While these works successfully tackle the challenges of non-monotonic resource consumption, they remain confined to the bandit feedback framework, where the agent observes the realized reward and resource consumption \emph{after} the action is selected. Our work addresses the challenge of active replenishment within the full-information ORA setting. We provide further discussion on related works in Appendix~\ref{rw}

\section{Preliminaries}\label{sec:preliminaries}

We study the problem of \emph{Online Resource Allocation} (ORA) with replenishable budgets, in which a decision maker (agent) selects actions over a finite horizon of \( T \) rounds. We denote the action space of the agent by \( \mathcal{X} \subseteq \mathbb{R}^K \).  At each round $t \in [T]$,\footnote{Throughout this paper, we denote by $[a] := \{1,\ldots,a \}$ the set of the first $a \in \mathbb{N}$ natural numbers.} the agent observes an input tuple $\gamma_t = (f_t, \bm{b}_t)$, where 
$f_t: \mathcal{X} \to [0,1]$ is the reward function
and $\bm{b}_t = (b_{t,1},\ldots,b_{t,m}): \mathcal{X} \to [-1,1]^m$ is the vector of resource consumption functions, with $b_{t,i}: \mathcal{X} \to [-1,1]$ denoting the consumption function for resource $i \in [m]$.
The input functions are fully revealed to the agent before any decision is made. We assume that the maxima below are attained and use a fixed measurable rule to select a maximizer. After observing the input, the agent selects an action $x_t \in \mathcal{X}$, which yields a reward $f_t(x_t)$ and a consumption $b_{t,i}(x_t)$ for each resource $i \in [m]$. 
Unlike traditional ORA models, this setting allows for \emph{non-monotonic} resource utilization: a negative consumption, $b_{t,i}(x_t) < 0$, means that action $x_t$ \emph{replenishes} the budget of resource $i$ rather than consuming it.
The input tuples may be either \emph{stochastic}, \emph{i.e.}, generated i.i.d.\ across rounds according to a fixed probability distribution, or \emph{adversarial}, \emph{i.e.}, chosen arbitrarily by an adversary. Throughout the paper, we denote by $\Gamma$ the set of all the possible input tuples $\gamma = (f, \bm{b})$.

The decision maker maintains an $m$-dimensional vector $\bm{B}_t \in \mathbb{R}_{\geq 0}^m$ representing the available budget for each resource, which is updated at each round.
The initial budget is $T \rho$ for each resource, where $\rho \in (0,1)$ denotes the per-round budget; equivalently, we let $\bm{\rho} = \rho \bm{1} \in \mathbb{R}_{\geq 0}^m$ denote the vector of per-round budgets.
Following the standard assumption in the ORA literature (see, \emph{e.g.},~\citep{balseiro2023best}), we assume the existence of a \emph{void action} $\varnothing \in \mathcal{X}$. To allow recovery from budget depletion when $\beta>0$, we require the void action to provide a guaranteed replenishment of every resource. Formally, there exists a constant $\beta \in [0,1]$ such that, for every input tuple $(f,\bm{b}) \in \Gamma$, we have $f(\varnothing) = 0$ and $b_i(\varnothing) \leq -\beta$ for every $i \in [m]$.

\begin{remark}[On the void action and replenishment]
    We assume that the void action $\varnothing$ provides a replenishment of at least $\beta$ for every input tuple, even in the stochastic setting. This assumption captures realistic ORA applications in which a fallback option, such as pausing operations or acquiring additional resources, guarantees a minimum replenishment regardless of the stochasticity of incoming input requests. The lower bound $\beta$ need not be known to the algorithm. The case $\beta = 0$ includes standard monotonic ORA.
\end{remark}

Given a specific sequence of inputs $\bm{\gamma} = (\gamma_1, ..., \gamma_T),$ where $\gamma_t = (f_t, \bm{b}_t)$ for all $t \in [T]$, we define the \emph{cumulative reward} of the agent over the time horizon $T$ as: 
\[\texttt{Rew}(T, \bm{\gamma}):= \sum_{t \in [T]} f_t(x_t).\]

The goal of the agent is to maximize $\texttt{Rew}(T, \bm{\gamma})$, while satisfying the resource consumption constraint, \emph{i.e.}: 
\[\sum_{t \in [T]}  b_{t,i}(x_t) \leq \rho T \quad \forall i \in [m].\]

\subsection{Lagrangian Formulation} 
A crucial tool to handle resource constraints across the time horizon is the Lagrangian formulation. This approach allows us to incorporate the budget constraints directly into the objective function. For any round $t \in [T]$, given the instantaneous reward function $f_t$, the resource consumption vector $\bm{b}_t$ and the per-iteration budget $\rho$, we define the instantaneous Lagrangian function as: 
\[\mathcal{L}_{f_t,\bm{b}_t}(x, \bm{\lambda}) := f_t(x) - \sum_{i=1}^{m} \lambda_i (b_{t,i}(x) - \rho) = f_t(x) - 
\langle \bm{\lambda}, \bm{b}_t(x) - \bm{\rho} \rangle,\] 
where $\bm{\lambda} \in \mathbb{R}^m_{\geq 0}$ is the vector of dual multipliers.

In this formulation, the dual variables $\lambda_i$ act as penalty weights for resource utilization. For actions consuming more than $\rho$ units of resource $i$, the term $b_{t,i}(x) - \rho$ is positive, thus penalizing the Lagrangian objective. However, in our framework with replenishable budgets, playing an action that restores a resource $j$ guarantees a strictly negative term $b_{t,j}(x) - \rho$ to the objective. This structural property naturally encourages the algorithm to select replenishing actions when the dual multipliers become large, preventing the budget to be prematurely exhausted. 


\subsection{Stochastic Setting} \label{st}
In the stochastic setting, we assume that the input tuples $\gamma_t = (f_t, b_t)$ are sampled i.i.d.\ from a fixed but unknown probability distribution $\mathcal{P}$ over the input space $\Gamma$. 
Formally, the input sequence is generated as $\bm{\gamma} = (\gamma_1,..., \gamma_T) \sim \mathcal{P}^T$. 
The standard baseline for this setting is defined as the optimal strategy that, in expectation over all input realizations, maximizes rewards while satisfying constraints, \emph{i.e.}, $\text{OPT}_{\text{stoc}} = \mathbb{E}_{\bm{\gamma}\sim \mathcal{P}^T}[\text{OPT}(\bm{\gamma})]$, where: 
\[
	\text{OPT}(\bm{\gamma}):=\begin{cases}
		\max_{(x_1,...,x_T) \in \mathcal{X}^T} &  \sum_{t \in [T]} f_t(x_t)\\
		\,\,\, \textnormal{s.t.} & \sum_{t \in [T]}  b_{t,i}(x_t) \leq \rho T \quad \forall i \in [m] \nonumber.
	\end{cases}
\]
In the stochastic setting, we evaluate the performance of the algorithm by using the \emph{cumulative regret} over $T$ rounds, defined as: 
 \[R_T := \text{OPT}_\text{stoc} - \mathbb{E}_{\bm{\gamma} \sim \mathcal{P}^T}[\texttt{Rew}(T, \bm{\gamma})].\] 

\subsection{Adversarial Setting} \label{ad}
In the adversarial setting, we assume that the input tuple $\gamma_t$ is chosen arbitrarily at each round $t$ by an adversary. Exploiting the replenishable nature of the resources, we can evaluate the agent's performance against a more challenging benchmark, the dynamic \emph{unconstrained} optimum. Given a fixed input sequence $\bm{\gamma} = (\gamma_1, \ldots, \gamma_T) \in \Gamma^T$, we define $\text{OPT}_\text{adv}$ as: 
\[
    \text{OPT}_\text{adv} \coloneqq \max_{(x_1,...,x_T) \in \mathcal{X}^T} \sum_{t \in [T]} f_t(x_t).
\]
Due to the fundamental impossibility of simultaneously achieving sublinear regret and constraint satisfaction under fully adversarial inputs (see, \emph{e.g.},~\citep{balseiro2023best}), the standard regret definition is not viable for this setting. To overcome this issue, we measure the performance of our algorithm against a fraction $\alpha \in (0,1)$ of the optimum, called $\alpha\text{-}$regret: 
\[
    \alpha\text{-}R_T := \alpha \cdot\text{OPT}_\text{adv} - \texttt{Rew}(T),
\]  
where $\alpha := \frac{\rho + \beta}{1 + \beta}$ and we let $\texttt{Rew}(T)\coloneqq \texttt{Rew}(T,\bm{\gamma})$, as the dependence on the deterministic sequence of input tuples $\bm{\gamma}$ is clear from the context. 

Notice that the replenishable nature of the environment---specifically, the existence of the void action guaranteeing a replenishment $\beta > 0$---implicitly provides a Slater's condition with margin $\rho + \beta$. If $\beta = 0$, no strictly positive replenishment is guaranteed, and the Slater's margin becomes the known per-round budget as in classical budget-constrained online allocation (see, \emph{e.g.},~\citep{balseiro2023best}). 

\section{Algorithm} 
In this section, we describe the behavior of \textsc{Replenishing Dual Gradient Descent}, which is the main algorithmic contribution of this work.

The pseudocode of the algorithm is provided in Algorithm~\ref{alg:alg1}. It receives as input the time horizon $T$, the learning rate $\eta$ and the per-round budget $\rho$. As a first step, it initializes the dual variables $\bm{\lambda}_1$ to zero and instantiates the starting available budget $\bm{B}_1 \coloneqq T \bm{\rho}$ (Line~\ref{algLine:2}).

\begin{algorithm}[t]
	\caption{\textsc{Replenishing Dual Gradient Descent}}
	\label{alg:alg1}
	\begin{algorithmic}[1]
		\Require $T$, $\eta$, $\rho$  \label{algLine:0}
        \State $\bm{\lambda}_1\gets \bm{0}$ , $\bm{B}_1 \gets T \bm{\rho}$ \label{algLine:2}
		\For{$t = 1, 2, \ldots , T$} 
        \State Observe input tuple $\gamma_t = (f_t, \bm{b}_t)$ \label{algLine:3}
        \If {$\exists i \in [m]: B_{t,i} < 1$} \label{algLine:4} 
        \State $x_t \gets \varnothing$ \label{algLine:5} 
        \State $\bm{B}_{t+1} \gets \bm{B}_{t} - \bm{b}_{t}(\varnothing)$ \Comment{Overall budget update} \label{algLine:6} 
        \State $\bm{\lambda}_{t+1} \gets \bm{\lambda}_{t}$ \label{algLine:7}
        \Else 
        \State $x_t \gets \argmax_{x \in \mathcal{X}}\mathcal{L}_{f_t,\bm{b}_t}(x, \bm{\lambda}_t)$ \label{algLine:8} 
        \State $\bm{B}_{t+1} \gets \bm{B}_{t} - \bm{b}_{t}(x_t) $ \Comment{Overall budget update} \label{algLine:9} 
        \State $\bm{\lambda}_{t+1} \gets \Pi_{\mathbb{R}^m_{\geq 0}}(\bm{\lambda}_t + \eta (\bm{b}_t(x_t) - \bm{\rho}))$ \Comment{Dual update} \label{algLine:10}
        \EndIf
		\EndFor
	\end{algorithmic}
\end{algorithm}

At each round $t$, after observing the current input tuple $\gamma_t$ (Line~\ref{algLine:3}), the algorithm verifies that the available budget is sufficient to proceed safely (Line~\ref{algLine:4}). Since the maximum resource consumption for any action is bounded by 1 (Section~\ref{sec:preliminaries}), a remaining budget $B_{t,i} < 1$ implies that selecting a budget-consuming action could lead to a hard constraint violation for resource $i$. Therefore, if this critical threshold is reached, the algorithm acts conservatively and plays the void action $\varnothing$ (Line~\ref{algLine:5}). This action does not consume any budget; instead, it provides a guaranteed replenishment ($\beta \geq 0$) effectively increasing the remaining capacity (Line~\ref{algLine:6}).
Crucially, during the rounds in which the algorithm is forced to play the void action, the dual variables are not updated (Line~\ref{algLine:7}). Pausing the dual updates is a fundamental design choice: if the update were performed during replenishment, the dual multipliers would be artificially decreased, causing the algorithm to over-consume resources once the budget is recovered. 

If the available budget is safe ($B_{t,i} \geq 1$ for all $i$), the algorithm selects an action $x_t$ by greedily maximizing the instantaneous Lagrangian function (Line~\ref{algLine:8}). The available overall budget is updated based on the consumption given by the selected action (Line~\ref{algLine:9}). As standard in ORA framework, this primal update requires solving a single optimization problem over the action space $\mathcal{X}$, avoiding the computational burden of periodically re-solving large-scale linear programs over the time horizon.

Notice that the budget remains non-negative at every round. Indeed, an active round starts with $B_{t,i}\geq1$ and consumes at most 1 unit of each resource, while a forced round only replenishes resources. Thus, for every $s\in[T]$ and $i\in[m]$, we have $\sum_{t=1}^s b_{t,i}(x_t)\leq\rho T$.

Finally, the algorithm updates the dual variables $\bm{\lambda}_t$ through a projected \emph{Online Gradient Descent} (OGD) step onto the positive orthant (Line~\ref{algLine:10}). Intuitively, if the selected action consumed more budget than the per-round budget $\bm{\rho}$, the gradient step increases the multipliers, effectively penalizing ``expensive'' actions in subsequent rounds. Conversely, if resources are underutilized, the multipliers decrease, encouraging the algorithm to exploit higher-reward, budget-consuming actions in the future.

\section{Theoretical results}\label{th}
In this section, we prove the theoretical guarantees attained by Algorithm~\ref{alg:alg1}. In particular, we first show that the Lagrangian multipliers remain bounded during the learning dynamics. Subsequently, we analyze the cumulative regret in the stochastic setting and the cumulative $\alpha$-regret in the adversarial setting. 

\subsection{Bounding the Lagrangian Multipliers} \label{lambda} 
A key result in our analysis is showing how the dual multipliers are automatically bounded by properties of the dual regret minimizer. In standard ORA frameworks (see, \emph{e.g.}, \cite{balseiro2023best}) resource consumption is assumed to be strictly non-negative. This monotonic depletion of resources ensures that the dual variables naturally self-regulate and remain bounded without requiring explicit projection onto a bounded set. Moreover, in these settings, the strict feasibility of the constraints is determined solely by the per-round budget $\rho$, typically known by the algorithm. In our setting, the non-monotonicity of the budget dynamics could, in principle, destabilize the dual space by making different components of the multiplier vector increase or decrease in a non-aligned way. Thus, we need to formally prove that the primal-dual learning dynamics still naturally self-regulate despite the presence of negative components. To do so, we exploit the existence of the guaranteed replenishment margin $\beta > 0$, which effectively increases the implicit Slater's condition to $\rho + \beta$. This augmented margin allows the algorithm to maintain a tightly bounded dual space.  
\begin{restatable}{lemma}{selfb}\label{selfb}
    In both the stochastic and the adversarial setting, Algorithm~\ref{alg:alg1} with $0 < \eta \leq \nicefrac{1}{2m}$ guarantees that
    the $\ell_1$-norm of the dual variables is bounded for all $t \in [T+1]$ as:
    \[ \|\bm{\lambda}_t\|_1 \leq \frac{4\sqrt{m}}{\rho+\beta}.\] 
\end{restatable}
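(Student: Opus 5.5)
The plan is to control the Euclidean norm $\|\bm{\lambda}_t\|_2$ by induction on $t$, using the void action as a Slater point, and then pass to the $\ell_1$-norm via $\|\bm{\lambda}\|_1 \le \sqrt{m}\,\|\bm{\lambda}\|_2$. Concretely, I will prove that $\|\bm{\lambda}_t\|_2 \le \frac{4}{\rho+\beta}$ for all $t\in[T+1]$, which gives the statement. The base case is trivial since $\bm{\lambda}_1=\bm 0$. In a forced round (Line~\ref{algLine:7}) we have $\bm{\lambda}_{t+1}=\bm{\lambda}_t$, so only active rounds need an argument.

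\textbf{Step 1 (Slater-type inequality from the void action).} In an active round, $x_t$ maximizes $\mathcal{L}_{f_t,\bm{b}_t}(\cdot,\bm{\lambda}_t)$, so comparing with $\varnothing$ gives
\[
f_t(x_t) - \langle \bm{\lambda}_t, \bm{b}_t(x_t)-\bm{\rho}\rangle \ \ge\ -\langle \bm{\lambda}_t, \bm{b}_t(\varnothing)-\bm{\rho}\rangle \ \ge\ (\rho+\beta)\|\bm{\lambda}_t\|_1 .
\]
Here I use $f_t(\varnothing)=0$, $b_{t,i}(\varnothing)\le-\beta$ and $\bm{\lambda}_t\ge 0$. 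Since $f_t\le 1$, and writing $\bm g_t := \bm{b}_t(x_t)-\bm{\rho}$, this yields
\[
\langle \bm{\lambda}_t, \bm g_t\rangle \le 1-(\rho+\beta)\|\bm{\lambda}_t\|_1 .
\]
I will also record that $\|\bm g_t\|_2^2 \le m(1+\rho)^2 \le 4m$.

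\textbf{Step 2 (one-step expansion).} Projection onto $\mathbb{R}^m_{\ge0}$ is non-expansive and $\bm 0$ lies in the orthant, so
\[
\|\bm{\lambda}_{t+1}\|_2^2 \le \|\bm{\lambda}_t\|_2^2 + 2\eta\langle \bm{\lambda}_t,\bm g_t\rangle + 4m\eta^2 .
\]

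\textbf{Step 3 (two cases).}
\begin{itemize}
\item \emph{Large case,} $\|\bm{\lambda}_t\|_2 \ge \frac{2}{\rho+\beta}$. Then $\|\bm{\lambda}_t\|_1\ge\|\bm{\lambda}_t\|_2\ge \frac{2}{\rho+\beta}$, so Step 1 gives $\langle\bm{\lambda}_t,\bm g_t\rangle\le -1$. The increment is therefore at most $-2\eta+4m\eta^2\le 0$, because $\eta\le \frac{1}{2m}$. Hence the norm does not increase, and the inductive bound is preserved.
\item \emph{Small case,} $\|\bm{\lambda}_t\|_2 < \frac{2}{\rho+\beta}$. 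Here I simply use $\|\bm{\lambda}_{t+1}\|_2\le\|\bm{\lambda}_t\|_2+\eta\|\bm g_t\|_2 \le \frac{2}{\rho+\beta}+2\sqrt m\,\eta$. With $\eta\le\frac1{2m}$ this is at most $\frac{2}{\rho+\beta}+\frac{1}{\sqrt m}$. Since $\rho+\beta\le 2$, we have $\frac1{\sqrt m}\le 1\le\frac{2}{\rho+\beta}$, so $\|\bm{\lambda}_{t+1}\|_2\le \frac{4}{\rho+\beta}$.
\end{itemize}

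The induction then closes, and multiplying by $\sqrt m$ gives the claimed $\ell_1$ bound.

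\textbf{Expected obstacle.} The delicate point is the large case. Because costs may be negative, individual coordinates of $\bm{\lambda}$ can move in opposite directions, so I cannot argue coordinatewise as in the monotone setting. The Euclidean potential avoids this, because the Slater inequality controls the whole inner product $\langle\bm{\lambda}_t,\bm g_t\rangle$ at once. I also need to check that no argument relies on the sign of $\bm{b}_t$, and that frozen duals in forced rounds do not break the induction; they do not, since $\bm{\lambda}$ is unchanged in those rounds.
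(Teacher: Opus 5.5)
Your proposal is correct and follows the paper's proof essentially step for step: the same induction on $\|\bm{\lambda}_t\|_2 \le \frac{4}{\rho+\beta}$, the same void-action comparison (the paper's Lemma~\ref{b4}), and the same two-case analysis with frozen duals in forced rounds. The only difference is cosmetic: you split cases on $\|\bm{\lambda}_t\|_2$ instead of $\|\bm{\lambda}_t\|_1$, which works equally well because $\|\bm{\lambda}_t\|_1 \ge \|\bm{\lambda}_t\|_2$.
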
 
This result formally captures the structural power of our algorithm. Indeed, the maximum magnitude of the multipliers is inversely proportional to the effective margin $\rho + \beta$. We prove Lemma~\ref{selfb} by controlling the squared Euclidean norm of the dual variables. During active rounds, the greedy primal step dominates the void action. Thus, we obtain:
\[\mathcal{L}_{f_t,\bm{b}_t}(x_t, \bm{\lambda}_t) \geq \mathcal{L}_{f_t,\bm{b}_t}(\varnothing, \bm{\lambda}_t) \geq (\rho + \beta) \|\bm{\lambda}_t\|_1.\]
Since rewards are bounded by 1, this implies:
\[\langle \bm{\lambda}_t,\bm{b}_t(x_t)-\bm{\rho}\rangle \leq 1-(\rho+\beta)\|\bm{\lambda}_t\|_1.\]
Moreover, the projected OGD update satisfies:
\[\|\bm{\lambda}_{t+1}\|_2^2 \leq \|\bm{\lambda}_t\|_2^2 + 2\eta\bigl(1-(\rho+\beta)\|\bm{\lambda}_t\|_1\bigr)+4\eta^2m.\]
Then, for $\eta \leq \nicefrac{1}{2m}$, the squared norm cannot increase whenever $\|\bm{\lambda}_t\|_1 \geq \nicefrac{2}{\rho+\beta}$. Below this threshold, a single update has bounded size. During forced rounds, the dual variables remain unchanged. Combining these properties yields the stated bound. The full details of the proof can be found in Appendix~\ref{app}.

Notice that the learning rates in Theorems~\ref{stre} and~\ref{advr} may exceed $\nicefrac{1}{2m}$ for small horizons. In this case, the stated regret bounds already exceed $T$ and follow from bounded rewards. Thus, we only apply Lemma~\ref{selfb} when its learning-rate condition holds.

\subsection{Stochastic Setting} 
In this section, we analyze the performance of Algorithm~\ref{alg:alg1} in the stochastic setting, where inputs are drawn from a fixed but unknown distribution. 
\begin{restatable}{theorem}{stre} \label{stre}
    In the stochastic setting, assuming the replenishment parameter $\beta$ is known, Algorithm~\ref{alg:alg1} with learning rate $\eta \coloneqq \frac{3}{2 (\rho + \beta)} \sqrt{\frac{1}{T}}$ guarantees: 
    \[R_T \leq \frac{2 + 12 m \sqrt{T}}{\rho + \beta}.\] 
    If $\beta$ is unknown, Algorithm~\ref{alg:alg1} with learning rate $\eta \coloneqq \frac{3}{2\rho} \sqrt{\frac{1}{T}}$ guarantees: 
    \[R_T \leq \frac{2}{\rho + \beta} + 6m\left( \frac{\rho}{(\rho + \beta)^2} + \frac{1}{\rho} \right) \sqrt{T}.\]
\end{restatable}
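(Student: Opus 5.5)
The proof follows the Balseiro et al. dual-descent template. We split the horizon into active rounds (Line~\ref{algLine:8}) and forced void rounds (Line~\ref{algLine:5}), and we use Lemma~\ref{selfb} to control the dual variables throughout.

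\textbf{Upper bound on the benchmark.} Define the dual function $D(\bm\lambda\mid\mathcal P)\coloneqq \mathbb E_{\gamma\sim\mathcal P}[\max_x \mathcal L_{f,\bm b}(x,\bm\lambda)]$. By weak duality, for any $\bm\lambda\ge 0$ we have $\text{OPT}(\bm\gamma)\le \sum_t \max_x \mathcal L_{f_t,\bm b_t}(x,\bm\lambda)$. Taking expectations gives $\text{OPT}_{\text{stoc}}\le T\,D(\bm\lambda\mid\mathcal P)$. Apply this with $\bm\lambda=\bar{\bm\lambda}\coloneqq \frac1{|\mathcal A|}\sum_{t\in\mathcal A}\bm\lambda_t$, where $\mathcal A$ is the set of active rounds. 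Since $D$ is convex and $\bm\lambda_t$ is independent of $\gamma_t$, Jensen's inequality together with an optional-stopping/martingale argument gives
\[
\mathbb E\Big[\textstyle\sum_{t\in\mathcal A}\mathcal L_{f_t,\bm b_t}(x_t,\bm\lambda_t)\Big]=\mathbb E\Big[\textstyle\sum_{t\in\mathcal A}D(\bm\lambda_t\mid\mathcal P)\Big]\ge \mathbb E\big[|\mathcal A|\,D(\bar{\bm\lambda}\mid\mathcal P)\big].
\]
The identity holds because $x_t$ maximizes the Lagrangian on active rounds and whether round $t$ is active is determined before $\gamma_t$ is drawn. On each forced round the agent loses at most $D(\bar{\bm\lambda})\le 1+\ldots$ relative to the benchmark. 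More carefully, for a forced round we use $D(\bm\lambda)\le 1+\rho\|\bm\lambda\|_1$ when $\bm b\ge -1$; we should instead bound $\text{OPT}_{\text{stoc}}/T\le 1$ directly, so each forced round costs at most $1$.

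\textbf{Active rounds.} Write $\sum_{t\in\mathcal A} f_t(x_t)=\sum_{t\in\mathcal A}\mathcal L_t(x_t,\bm\lambda_t)+\sum_{t\in\mathcal A}\langle\bm\lambda_t,\bm b_t(x_t)-\bm\rho\rangle$. The second sum is handled by the OGD regret bound for the linear losses $\ell_t(\bm\lambda)=\langle\bm\lambda,\bm b_t(x_t)-\bm\rho\rangle$, compared against a suitable $\bm\lambda^\ast$:
\[
\sum_{t\in\mathcal A}\langle\bm\lambda_t-\bm\lambda^\ast,\bm b_t(x_t)-\bm\rho\rangle\ge -\frac{\|\bm\lambda^\ast\|_2^2}{2\eta}-2\eta m|\mathcal A|.
\]
The sign convention matters here: OGD minimizes $-\ell_t$. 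Choosing $\bm\lambda^\ast=0$ yields $\sum_{t\in\mathcal A} f_t(x_t)\ge |\mathcal A|D(\bar{\bm\lambda})-\frac{\|\bm\lambda_1\|^2}{2\eta}-\ldots$. The more standard variant keeps the telescoping term $\|\bm\lambda_{|\mathcal A|+1}\|^2/(2\eta)$, which Lemma~\ref{selfb} bounds by $16m/(2\eta(\rho+\beta)^2)$. Both this term and $2\eta m T$ are therefore $O(m/((\rho+\beta)^2\eta))$ and $O(\eta mT)$ respectively. With $\eta=\frac{3}{2(\rho+\beta)\sqrt T}$ these balance to $O(m\sqrt T/(\rho+\beta))$, matching the constant $12m$. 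When $\eta$ is tuned with $\rho$ instead, the same two terms produce $\frac{\rho}{(\rho+\beta)^2}\sqrt T$ and $\frac{1}{\rho}\sqrt T$, which explains the second bound.

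\textbf{Forced rounds (main obstacle).} It remains to bound the expected number of forced rounds, $T-|\mathcal A|$, by $O(1/(\rho+\beta))$ plus terms absorbed into the $\sqrt T$ part, and also to handle the regret of stopping early. The plan is as follows.

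\emph{Step 1: bound consumption on active rounds.} The OGD update gives $\bm\lambda_{t+1}\ge\bm\lambda_t+\eta(\bm b_t(x_t)-\bm\rho)$ coordinatewise on active rounds. Summing, $\sum_{t\in\mathcal A}(b_{t,i}(x_t)-\rho)\le \lambda_{T+1,i}/\eta\le \frac{4\sqrt m}{\eta(\rho+\beta)}$.

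\emph{Step 2: count forced rounds.} The budget of resource $i$ drops below $1$ only when active-round overconsumption exceeds the slack. Each forced round replenishes at least $\beta+$ (effectively) the $\rho$ accrued in the benchmark accounting, so a forced round increases $B_{t,i}$ by at least $\beta$. Consequently, after the first time the budget falls below $1$, at most $\lceil 1/\beta\rceil$ consecutive forced rounds occur per episode. Combined with the overconsumption bound from Step 1, the number of forced rounds should be $O(\frac{1}{\rho+\beta}(1+\frac{\sqrt m}{\eta(\rho+\beta)}))$.

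I expect this accounting to be the delicate step. The fallback is to compare against the benchmark by charging each forced round to $\bar{\bm\lambda}$ via $D(\bar{\bm\lambda})\ge(\rho+\beta)\|\bar{\bm\lambda}\|_1$, which is the void-action bound, and then using the complementary-slackness-style term $\langle\bar{\bm\lambda},\cdot\rangle$ to cancel it. This would give the leftover $\frac{2}{\rho+\beta}$ term.
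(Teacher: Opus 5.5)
Your reduction to $R_T\le\mathbb{E}[|\mathcal{T}_{\text{forced}}|]-\mathbb{E}\bigl[\sum_{t\in\mathcal{T}_{\text{active}}}\langle\bm\lambda_t,\bm b_t(x_t)-\bm\rho\rangle\bigr]$ is the same as the paper's. It uses weak duality at the random $\bar{\bm\lambda}$, the martingale identity, Jensen, and $\text{OPT}_{\text{stoc}}\le T$ on forced rounds. Your Step~1 is also correct.

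The genuine gap is Step~2. It carries the whole difficulty and is not proved. The per-episode count does not work:
\begin{itemize}
\item For $\beta=0$, which the theorem covers, a forced round need not replenish anything, so $\lceil 1/\beta\rceil$ is vacuous. If $b_{t,i}(\varnothing)=0$, every round after the budget first drops below $1$ is forced.
\item For $\beta>0$, a per-episode bound says nothing about how many episodes occur. A single active round can push the budget back below $1$ right after recovery.
\end{itemize}
Your remark about the ``$\rho$ accrued in the benchmark accounting'' points at the right mechanism. However, it must be turned into a global inequality, and ``should be $O(\cdot)$'' is not a derivation. Separately, you invoke Lemma~\ref{selfb} without checking $\eta\le 1/(2m)$; the paper handles the complementary case by $R_T\le T$.

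The missing idea is the accounting at the last forced round, which is the paper's Lemma~\ref{aux1}. Let $t_\varnothing=\max\mathcal{T}_{\text{forced}}$ and let $j$ be a resource with $B_{t_\varnothing,j}<1$. The budget identity gives $\sum_{t\in\mathcal{T}_{\text{active}},\,t<t_\varnothing}b_{t,j}(x_t)>\rho T-1+\beta(|\mathcal{T}_{\text{forced}}|-1)$. Combine this with three facts:
\begin{itemize}
\item the prefix version of your Step~1, $\sum_{t\in\mathcal{T}_{\text{active}},\,t<t_\varnothing}(b_{t,j}(x_t)-\rho)\le\lambda_{t_\varnothing,j}/\eta\le 4\sqrt m/(\eta(\rho+\beta))$;
\item the count $T-|\{t\in\mathcal{T}_{\text{active}}:t<t_\varnothing\}|\ge|\mathcal{T}_{\text{forced}}|$;
\item the assumption $\beta\le 1$.
\end{itemize}
Together these give the pathwise bound
\[
|\mathcal{T}_{\text{forced}}|\le\frac{2}{\rho+\beta}+\frac{4\sqrt m}{\eta(\rho+\beta)^2}.
\]
Your comparator-$\bm{0}$ OGD bound, started from $\bm\lambda_1=\bm{0}$, gives $-\sum_{t\in\mathcal{T}_{\text{active}}}\langle\bm\lambda_t,\bm b_t(x_t)-\bm\rho\rangle\le 2\eta mT$. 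There is no $1/(\eta(\rho+\beta)^2)$ penalty there, because the telescoping term has a favorable sign. So that term in your final bound comes only from the forced-round count, not from OGD as you suggest. Hence $R_T\le\frac{2}{\rho+\beta}+\frac{4\sqrt m}{\eta(\rho+\beta)^2}+2\eta mT$, which lies within both stated bounds for the prescribed learning rates.

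Repaired this way, your route differs from the paper's. The paper never bounds $|\mathcal{T}_{\text{forced}}|$ separately; it cancels it inside the dual penalty (Lemma~\ref{aux2}):
\begin{itemize}
\item it runs the OGD regret on the active rounds before $t_\varnothing$ against the comparator with $\mu_j=1/(\rho+\beta)$ and $\mu_i=0$ otherwise;
\item it then restarts with comparator $\bm{0}$ after $t_\varnothing$, paying $\|\bm\lambda_{t_\varnothing+1}\|_2^2/(2\eta)$ via Lemma~\ref{selfb}.
\end{itemize}
Your version is more elementary and has slightly better constants. However, it relies on the coordinatewise inequality $\bm\lambda_{t+1}\ge\bm\lambda_t+\eta(\bm b_t(x_t)-\bm\rho)$, which is specific to projected gradient steps on the orthant. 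The paper's comparator argument needs only a regret bound and bounded iterates.
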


Theorem~\ref{stre} establishes that Algorithm~\ref{alg:alg1} attains $\mathcal{O}(\sqrt{T})$ cumulative regret bound, which is the optimal guarantee for stochastic ORA. The proof includes two phases: we first derive an upper bound on the stochastic benchmark $\text{OPT}_{\text{stoc}}$ by invoking weak duality and defining the expected Lagrangian over the input distribution:
\[\bar{\mathcal{L}}_{f,\bm{b}}(\bm{\lambda} | \mathcal{P}) := \mathbb{E}_{(f, \bm{b}) \sim \mathcal{P}} [\max_{x \in \mathcal{X}}\mathcal{L}_{f,\bm{b}}(x, \bm{\lambda})].\] 
Next, we evaluate algorithmic performance across the active decision rounds. By taking the conditional expectation of the instantaneous reward $f_t(x_t)$ with respect to the filtration of past events $\Gamma_{t-1} = \{\gamma_1,..., \gamma_{t-1} \}$, we can relate it to the instantaneous Lagrangian via a martingale argument:
\[\mathbb{E} [f_t(x_t) \cdot \mathbb{I} \{t \in \mathcal{T}_{\text{active}}\} | \Gamma_{t-1}] = \mathbb{I} \{t \in \mathcal{T}_{\text{active}}\}  \bar{\mathcal{L}}_{f,\bm{b}}(\bm{\lambda}_t | \mathcal{P}) +  \mathbb{E}  [ \mathbb{I} \{t \in \mathcal{T}_{\text{active}}\} \langle \bm{\lambda}_t, \bm{b}_t(x_t) - \bm{\rho} \rangle | \Gamma_{t-1}] .\]
By summing over the time horizon and applying the Tower Rule, this step allows us to upper bound the overall regret by the expected number of forced replenishments and the cumulative dual penalty during the active rounds $\mathcal{T}_{\text{active}}$: 
\[R_T \leq \mathbb{E}_{\bm{\gamma}} [ |\mathcal{T}_{\text{forced}}| ] - \mathbb{E}\left[ \sum_{t \in \mathcal{T}_{\text{active}}} \langle \bm{\lambda}_t, \bm{b}_t(x_t) - \bm{\rho} \rangle \right]. \]
Finally, we bound the cumulative dual penalty by exploiting the no-regret guarantees of the OGD update. Crucially, the physical limits of the budget ensure that whenever the safety threshold is reached---thus triggering a forced replenishment---the overall resource consumption provides a strict lower bound on the dual penalty. The complete proof can be found in Appendix~\ref{app}. 

\begin{remark}
    As the theorem shows, the algorithm does not need to know the true replenishment parameter $\beta$ to guarantee the optimal $\widetilde{\mathcal{O}} (\sqrt{T})$ rate. When $\beta$ is unknown, the learning rate is tuned solely on the known per-round budget $\rho$, since $\rho \leq \rho + \beta$. While the constant factors inevitably worsen, suggesting that an accurate estimate of $\beta$ is useful to tighten the bound, the sublinearity is preserved also in this case.
\end{remark}

\subsection{Adversarial Setting}
Finally, in this section, we analyze the performance of the algorithm in the adversarial regime, where inputs may arbitrarily change at each round.
\begin{restatable}{theorem}{advr} \label{advr}
    In the adversarial setting, assuming the replenishment parameter $\beta$ is known, Algorithm~\ref{alg:alg1} with learning rate $\eta \coloneqq \frac{3}{2 (\rho + \beta)} \sqrt{\frac{1}{T}}$ guarantees: 
    \[\alpha\text{-}R_T \leq \frac{2 + 12 m \sqrt{T}}{\rho + \beta}.\] 
    If $\beta$ is unknown, Algorithm~\ref{alg:alg1} with learning rate $\eta \coloneqq \frac{3}{2 \rho} \sqrt{\frac{1}{T}}$ guarantees: 
    \[\alpha\text{-}R_T \leq \frac{2}{\rho + \beta} + 6m\left( \frac{\rho}{(\rho + \beta)^2} + \frac{1}{\rho} \right) \sqrt{T}.\]
\end{restatable}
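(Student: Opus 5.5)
The argument mirrors the stochastic one, with the expected dual function replaced by a per-round comparison against the unconstrained optimum. Fix the adversarial sequence $\bm{\gamma}$ and let $x_t^\star \in \argmax_{x} f_t(x)$, so that $\text{OPT}_\text{adv} = \sum_t f_t(x_t^\star)$. Split $[T]$ into active rounds $\mathcal{T}_{\text{active}}$ and forced rounds $\mathcal{T}_{\text{forced}}$. Forced rounds contribute zero reward and at most $\alpha$ each to the benchmark.

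\textbf{Active rounds.} For $t \in \mathcal{T}_{\text{active}}$, the primal step maximizes the Lagrangian. I will compare $x_t$ not with $x_t^\star$ directly but with the mixed comparator ``play $x_t^\star$ with weight $\alpha$ and $\varnothing$ with weight $1-\alpha$''. Because $x_t$ maximizes $\mathcal{L}_{f_t,\bm{b}_t}(\cdot,\bm{\lambda}_t)$, it dominates both $x_t^\star$ and $\varnothing$, and hence any convex combination of the two Lagrangian values. This gives
\[
f_t(x_t) - \langle \bm{\lambda}_t, \bm{b}_t(x_t)-\bm{\rho}\rangle \ge \alpha f_t(x_t^\star) - \sum_i \lambda_{t,i}\bigl(\alpha(b_{t,i}(x_t^\star)-\rho) + (1-\alpha)(b_{t,i}(\varnothing)-\rho)\bigr).
\]
Use $b_{t,i}(x_t^\star)\le 1$ and $b_{t,i}(\varnothing)\le -\beta$. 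The bracket is then at most $\alpha(1-\rho) - (1-\alpha)(\beta+\rho) = \alpha(1+\beta) - (\rho+\beta) = 0$ by the choice $\alpha = \frac{\rho+\beta}{1+\beta}$. This identity is exactly why $\alpha$ takes that form. Hence, on active rounds,
\[
f_t(x_t) \ge \alpha f_t(x_t^\star) + \langle \bm{\lambda}_t, \bm{b}_t(x_t)-\bm{\rho}\rangle.
\]
Summing over all rounds yields
\[
\alpha\text{-}R_T \le |\mathcal{T}_{\text{forced}}| - \sum_{t\in\mathcal{T}_{\text{active}}}\langle \bm{\lambda}_t, \bm{b}_t(x_t)-\bm{\rho}\rangle,
\]
which is the same intermediate inequality as in the stochastic proof, now holding deterministically.

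\textbf{Dual term.} I will apply the standard projected OGD regret bound over the active rounds only; forced rounds leave $\bm{\lambda}$ unchanged, so the sequence restricted to active rounds is a genuine OGD run. For any comparator $\bm{\lambda}\ge 0$,
\[
\sum_{t\in\mathcal{T}_{\text{active}}}\langle \bm{\lambda}-\bm{\lambda}_t, \bm{b}_t(x_t)-\bm{\rho}\rangle \le \frac{\|\bm{\lambda}\|_2^2}{2\eta} + 2\eta m T,
\]
using $\|\bm{b}_t(x_t)-\bm{\rho}\|_2^2\le 4m$. Choosing $\bm{\lambda}=\bm 0$ gives the generic upper bound. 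To absorb $|\mathcal{T}_{\text{forced}}|$, I instead choose $\bm{\lambda} = \frac{1}{\rho+\beta}\bm e_i$, where $i$ is a resource that triggered forcing. This is the main obstacle, and I would handle it exactly as in the stochastic proof.

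Each forced round replenishes resource $i$ by at least $\beta$ and is followed by a return to $B_{t,i}\ge 1$. A maximal forced block therefore has length at most about $1/\beta$. More usefully, forcing means that cumulative active consumption of resource $i$ satisfies $\sum_{t\in\mathcal{T}_{\text{active}}} (b_{t,i}(x_t)-\rho) \gtrsim \rho T - 1 - \rho|\mathcal{T}_{\text{active}}| + \beta|\mathcal{T}_{\text{forced}}|$, up to the last forced time. Since $\rho T - \rho|\mathcal{T}_{\text{active}}| = \rho|\mathcal{T}_{\text{forced}}|$, the right-hand side is $(\rho+\beta)|\mathcal{T}_{\text{forced}}| - 1$. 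Multiplying by $\frac{1}{\rho+\beta}$ cancels $|\mathcal{T}_{\text{forced}}|$ and leaves the additive term $\frac{1}{\rho+\beta}$, or $\frac{2}{\rho+\beta}$ once boundary effects of the last block are included. If no forcing ever occurs, take $\bm{\lambda}=\bm 0$.

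Since the comparator must be fixed over the whole run, I will set it with respect to the last forced round $\tau$. On the tail after $\tau$, I bound the Lagrangian terms using Lemma~\ref{selfb} and the OGD regret from $\bm{\lambda}_{\tau}$, with $\|\bm{\lambda}_\tau\|_2^2\le 16m/(\rho+\beta)^2$. This gives
\[
\alpha\text{-}R_T \le \frac{2}{\rho+\beta} + \frac{c\,m}{\eta(\rho+\beta)^2} + 2\eta m T.
\]
With $\eta=\frac{3}{2(\rho+\beta)\sqrt T}$ this yields $\frac{2+12m\sqrt T}{\rho+\beta}$. With $\eta=\frac{3}{2\rho\sqrt T}$, the two terms become $\frac{6m\rho}{(\rho+\beta)^2}\sqrt{T}$ and $\frac{6m}{\rho}\sqrt{T}$ (up to constants), which is the second bound. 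When $\eta>1/(2m)$, the bound exceeds $T$ and holds trivially.
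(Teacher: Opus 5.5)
Your proposal is correct and follows the paper's proof essentially step for step. The same mixed comparator (play $x_t^\star$ with probability $\alpha$ and $\varnothing$ otherwise) on active rounds reduces $\alpha\text{-}R_T$ to $|\mathcal{T}_{\text{forced}}| - \sum_{t\in\mathcal{T}_{\text{active}}}\langle \bm{\lambda}_t, \bm{b}_t(x_t)-\bm{\rho}\rangle$. Your treatment of the dual term is exactly what the paper packages as Lemma~\ref{aux2}: comparator $\bm{e}_j/(\rho+\beta)$ on the active rounds before the last forced round $\tau$, then comparator $\bm{0}$ started from $\bm{\lambda}_{\tau+1}=\bm{\lambda}_\tau$ and bounded via Lemma~\ref{selfb}. (One harmless imprecision: the consumption bound should subtract $\rho$ times the number of active rounds before $\tau$, not $\rho|\mathcal{T}_{\text{active}}|$; your version just discards the nonnegative term $\rho\,|\{t\in\mathcal{T}_{\text{active}}: t>\tau\}|$, as the paper does.)
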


Theorem~\ref{advr} proves that \textsc{Replenishing Dual Gradient Descent} ensures an $\mathcal{O}(\sqrt{T})$ bound for the $\alpha$-regret under fully adversarial inputs. We establish this result by leveraging the regret guarantees of the unconstrained OGD updates. The analysis naturally partitions the time horizon into the set of active decision rounds $\mathcal{T}_{\text{active}}$ and the set of forced replenishment rounds $\mathcal{T}_{\text{forced}}$. During the forced replenishments, the algorithm conservatively plays the void action $\varnothing$, thus the regret incurred in this phase is trivially bounded by $|\mathcal{T}_{\text{forced}}|$.

To analyze the active rounds, we rely on the fact that the primal update greedily maximizes the instantaneous Lagrangian. We benchmark the algorithm's performance against a tailored, randomized reference strategy $\xi_{t}^{\diamond}$. Specifically, at each step $t$, this strategy selects the hindsight optimal action $x^*_t$ with probability $\alpha \coloneqq \frac{\rho + \beta}{1 + \beta}$ and defaults to the void action with probability $1 - \alpha$. The parameter $\alpha$ is determined by the requirement to offset the worst-case resource consumption of the offline optimal sequence with the guaranteed margin $\beta$ provided by the void action. 

Since $f_t(\varnothing) = 0$, the expected reward of this strategy scales exactly by $\alpha$: 
\[\mathbb{E}_{x\sim \xi_t^\diamond}[f_t(x)] = \alpha \cdot f_t(x_t^*) + (1 - \alpha) f_t(\varnothing) \geq \alpha \cdot f_t(x_t^*).\]
Simultaneously, thanks to the guaranteed replenishment margin $\beta$ provided by the void action, this randomized policy remains strictly feasible in expectation: 
\[\mathbb{E}_{x\sim \xi_t^\diamond}[b_{t,i}(x)] \leq \alpha \cdot 1 + (1 - \alpha)(-\beta) = \rho.\]
This allows us to reduce the $\alpha$-regret analysis to bounding the cumulative dual penalty during the active rounds $\mathcal{T}_{\text{active}}$, plus the trivial regret incurred during forced replenishments $\mathcal{T}_{\text{forced}}$: 
\[\alpha\text{-}R_T \leq  - \sum_{t \in \mathcal{T}_{\text{active}}} \langle \bm{\lambda}_t, \bm{b}_t(x_t) - \bm{\rho} \rangle + |\mathcal{T}_{\text{forced}}|.\]
Similarly to the stochastic analysis, we control the penalty via the no-regret properties of the OGD updates. The complete proof is provided in Appendix~\ref{app}.

\section{Discussion and Open Problems}
In this work, we introduce a generalized model for online resource allocation that includes replenishable budgets, departing from the standard monotonic consumption assumption. We present \textsc{Replenishing Dual Gradient Descent}, a dual-based algorithm that achieves optimal $\widetilde{\mathcal O}(\sqrt{T})$ expected regret in stochastic environments and $\widetilde{\mathcal O}(\sqrt{T})$ $\alpha$-regret in adversarial settings, while strictly satisfying the budget constraints. Our analysis shows that the presence of an active replenishment margin ($\beta > 0$) tightens the regret bounds and improves the adversarial competitive ratio $\alpha$, without requiring prior knowledge of the parameter $\beta$.

This problem formulation opens several directions for future theoretical research. For instance: 

\begin{itemize}
    \item Replenishment in expectation: we currently assume that the void action $\varnothing$ provides a replenishment of at least $\beta$ for every input tuple. A natural relaxation is to require this lower bound only in conditional expectation. In this context, it would be interesting to study whether the algorithm can estimate $\beta$, rather than relying on the worst case bound when $\beta$ is unknown.
    \item Resource-specific replenishing actions: our current model assumes a single void action that uniformly replenishes all $m$ resources by at least $\beta$. In many real-world applications (e.g., inventory systems), however, replenishment is inherently resource-specific. It may be available through multiple independent replenishing actions with different margins, instead of a single global one. Extending our dual-based framework to this more general setting would broaden its practical applicability.
\end{itemize}

\bibliographystyle{plainnat}
\bibliography{example_paper}

\newpage
\appendix
\section{Additional Related Works} \label{rw}
A line of work related to our setting studies decision making under more general long-term constraints, which unlike budget constraints are not hard, i.e., they can suffer temporary violations over the time horizon. In this framework, the quantity to control during the learning process is the cumulative constraint violation, which can receive negative per-round contribution: an action may satisfy a constraint with slack, compensating for violations incurred in earlier rounds. Despite showing this ``compensating'' feature, this setting fundamentally differs from the replenishable ORA setting. For general constraints, no single action is guaranteed a priori to offset violations across all constraints simultaneously. In our setting, a single, known action $\varnothing$ provides a contribution of at most $-\beta$ for every input tuple. 
\citep{mahdavi2012tradingregretefficiencyonline} and \citep{jenatton2015adaptivealgorithmsonlineconvex} were among the first to show that sublinear regret and constraint violation can be attained simultaneously under fixed constraints. \citep{Mannor}, \citep{yu2017onlineconvexoptimizationstochastic} and \citep{sun2022nearoptimalprimaldualalgorithmsquantitybased} later extended this guarantee for time-varying constraints. A closely related model, \emph{Bandits with Knapsacks} (BwK) was introduced by \citep{Badanidiyuru_2013} for budget constraints and was sharpened for the stochastic setting by several works, such as \citep{agrawal_globalconvex}. The results were extended to adversarial and non-stationary environments by \citep{immorlica2023adversarialbanditsknapsacks}, \citep{kesselheim2020} and \citep{castiglioni2022online}. Most relevant to our discussion, \citep{bernasconi2024noregret} and \citep{castiglioni2024online} generalize BwK beyond budgets accommodating arbitrary long-term constraints.

\section{Omitted Proofs and Lemmas} \label{app}
\subsection{Omitted Proofs and Lemmas of Section \ref{th}}
\begin{lemma} \label{ogd}
    Let $\mathcal{T}_{\text{forced}}$ be the set of rounds in which the budget condition forces Algorithm~\ref{alg:alg1} to play $\varnothing$, and let $\mathcal{T}_{\text{active}}=[T]\setminus\mathcal{T}_{\text{forced}}$. Let the dual algorithm be OGD on $\mathbb{R}^m_{\geq 0}$ with learning rate $\eta>0$, with no updates during forced rounds. Then, for any $\bm{\mu}\in\mathbb{R}^m_{\geq 0}$ and $[t_1,t_2]\subseteq[T]$, it holds:
    \begin{equation*}
    \sum_{t \in [t_1,t_2]\cap\mathcal{T}_{\text{active}}}\langle\bm{\lambda}_t,\bm{b}_t-\bm{\rho}\rangle
    \geq \sum_{t \in [t_1,t_2]\cap\mathcal{T}_{\text{active}}}\langle\bm{\mu},\bm{b}_t-\bm{\rho}\rangle-\frac{\|\bm{\lambda}_{t_1}-\bm{\mu}\|_2^2}{2\eta}-2\eta Tm,
    \end{equation*}
    where $\bm{b}_t=\bm{b}_t(x_t)$. 
\end{lemma}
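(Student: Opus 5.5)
We will use the standard one-step analysis of projected online gradient descent. Since forced rounds leave the dual iterate unchanged, we can ignore them. Only the active rounds in $[t_1,t_2]$ contribute, and along them the iterates evolve exactly as projected OGD on the losses $\bm{\lambda}\mapsto \langle \bm{\lambda}, \bm{\rho}-\bm{b}_t\rangle$. The claimed inequality, $\sum \langle \bm{\lambda}_t-\bm{\mu}, \bm{b}_t-\bm{\rho}\rangle \ge -(\cdot)$, is then just the OGD regret bound, written for gradient ascent on the linear functions $\langle\cdot,\bm{b}_t-\bm{\rho}\rangle$ with comparator $\bm{\mu}$.

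\textbf{Step 1 (per-round inequality).} Fix an active round $t$ and $\bm{\mu}\in\mathbb{R}^m_{\geq 0}$. Write $\bm{g}_t=\bm{b}_t-\bm{\rho}$. Projection onto the closed convex set $\mathbb{R}^m_{\geq0}$ is nonexpansive and fixes $\bm{\mu}$, so
\[
\|\bm{\lambda}_{t+1}-\bm{\mu}\|_2^2\le \|\bm{\lambda}_t+\eta\bm{g}_t-\bm{\mu}\|_2^2=\|\bm{\lambda}_t-\bm{\mu}\|_2^2+2\eta\langle \bm{\lambda}_t-\bm{\mu},\bm{g}_t\rangle+\eta^2\|\bm{g}_t\|_2^2 .
\]
Rearranging gives
\[
\langle \bm{\lambda}_t-\bm{\mu},\bm{g}_t\rangle\ge \frac{\|\bm{\lambda}_{t+1}-\bm{\mu}\|_2^2-\|\bm{\lambda}_t-\bm{\mu}\|_2^2}{2\eta}-\frac{\eta}{2}\|\bm{g}_t\|_2^2 .
\]
Since $b_{t,i}\in[-1,1]$ and $\rho\in(0,1)$, each coordinate satisfies $|g_{t,i}|\le 2$. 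Hence $\|\bm{g}_t\|_2^2\le 4m$, and the last term is at least $-2\eta m$.

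\textbf{Step 2 (telescoping across forced rounds).} We sum the Step 1 inequality over $t\in[t_1,t_2]\cap\mathcal{T}_{\text{active}}$. On forced rounds $\bm{\lambda}_{t+1}=\bm{\lambda}_t$, so the distance $\|\bm{\lambda}_t-\bm{\mu}\|_2^2$ is constant across them. The potential differences therefore telescope over the whole interval: the sum of $\|\bm{\lambda}_{t+1}-\bm{\mu}\|^2-\|\bm{\lambda}_t-\bm{\mu}\|^2$ over active rounds equals $\|\bm{\lambda}_{t_2+1}-\bm{\mu}\|^2-\|\bm{\lambda}_{t_1}-\bm{\mu}\|^2\ge -\|\bm{\lambda}_{t_1}-\bm{\mu}\|_2^2$. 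The error terms sum to at least $-2\eta m\,|[t_1,t_2]\cap\mathcal{T}_{\text{active}}|\ge -2\eta mT$. Moving the $\bm{\mu}$ terms to the right-hand side gives the statement.

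\textbf{Main obstacle.} The only delicate point is the bookkeeping in Step 2: we must check that pausing the updates is compatible with telescoping from $t_1$ rather than from the first active round. This holds because the iterate at the first active round in $[t_1,t_2]$ equals $\bm{\lambda}_{t_1}$. It also matters that $\bm{\lambda}_t$ is used in the gradient step exactly at active rounds, which Algorithm~\ref{alg:alg1} ensures by Line~\ref{algLine:10}. Beyond that, the bound on $\|\bm{g}_t\|^2$ uses only the ranges fixed in Section~\ref{sec:preliminaries}.
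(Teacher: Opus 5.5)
Your proposal is correct and is essentially the paper's own proof. Both use the one-step inequality from nonexpansiveness of the projection onto $\mathbb{R}^m_{\geq 0}$ (which fixes $\bm{\mu}$), telescope over the active rounds in $[t_1,t_2]$ because $\bm{\lambda}$ is frozen on forced rounds, drop the nonnegative terminal distance $\|\bm{\lambda}_{t_2+1}-\bm{\mu}\|_2^2$, and bound $\|\bm{b}_t-\bm{\rho}\|_2^2\leq m(1+\rho)^2\leq 4m$.
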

\begin{proof}
    For each active round, by the properties of Euclidean projection onto a convex set,
    \begin{align*}
        \|\bm{\lambda}_{t+1} - \bm{\mu}\|_2^2 &= \|\Pi_{\mathbb{R}^m_{\geq 0}} (\bm{\lambda}_t + \eta (\bm{b}_t - \bm{\rho}) ) - \bm{\mu} \|_2^2 \\ 
        &\leq \|\bm{\lambda}_t + \eta (\bm{b}_t - \bm{\rho})  - \bm{\mu}\|_2^2 \\
        &= \| \bm{\lambda}_t - \bm{\mu} \|_2^2 + \eta^2 \|\bm{b}_t - \bm{\rho} \|_2^2 + 2\eta \langle \bm{\lambda}_t - \bm{\mu}, \bm{b}_t - \bm{\rho}  \rangle. 
    \end{align*} 
    Rearranging the terms: 
    \begin{align*}
        2\eta \langle \bm{\mu} - \bm{\lambda}_t, \bm{b}_t - \bm{\rho}  \rangle &\leq \| \bm{\lambda}_t - \bm{\mu} \|_2^2 - \|\bm{\lambda}_{t+1} - \bm{\mu}\|_2^2 + \eta^2 \|\bm{b}_t - \bm{\rho} \|_2^2 \\ 
        \langle \bm{\mu} - \bm{\lambda}_t, \bm{b}_t - \bm{\rho}  \rangle &\leq \frac{1}{2\eta} \| \bm{\lambda}_t - \bm{\mu} \|_2^2 - \frac{1}{2\eta}\|\bm{\lambda}_{t+1} - \bm{\mu}\|_2^2 + \frac{\eta}{2} \|\bm{b}_t - \bm{\rho} \|_2^2.
    \end{align*}
    Summing over active rounds, the squared-distance terms telescope since the dual variables remain unchanged during forced rounds. Thus, we obtain:
    \begin{align*}
        \sum_{t \in [t_1,t_2]\cap\mathcal{T}_{\text{active}}}\langle \bm{\mu} - \bm{\lambda}_t, \bm{b}_t - \bm{\rho}  \rangle &\leq \frac{1}{2\eta} \| \bm{\lambda}_{t_1} - \bm{\mu} \|_2^2 - \frac{1}{2\eta}\|\bm{\lambda}_{t_2 + 1} - \bm{\mu}\|_2^2 + \frac{\eta}{2} \sum_{t \in [t_1,t_2]\cap\mathcal{T}_{\text{active}}} \|\bm{b}_t - \bm{\rho} \|_2^2 \\ 
        &\leq \frac{1}{2\eta} \| \bm{\lambda}_{t_1} - \bm{\mu} \|_2^2 - \frac{1}{2\eta}\|\bm{\lambda}_{t_2 + 1} - \bm{\mu}\|_2^2 + \frac{\eta}{2} T m (1 + \rho)^2 \\ 
        &\leq \frac{1}{2\eta} \| \bm{\lambda}_{t_1} - \bm{\mu} \|_2^2 + 2\eta T m, 
    \end{align*}
    where in the last step we used the fact that $\rho < 1 $, yielding $(1 + \rho)^2 \leq 4$.
    Rearranging the terms gives the stated bound.
\end{proof} 

\begin{lemma} \label{aux1}
    Let $\mathcal{T}_{\text{forced}} \subseteq [T]$ be the set of rounds in which the algorithm is forced to play the void action---due to the triggering of the budget condition---and $\mathcal{T}_{\text{active}} = [T] \setminus \mathcal{T}_{\text{forced}}$ be the set of active rounds, in which the algorithm maximizes the Lagrangian. Assume $\mathcal{T}_{\text{forced}}\neq\varnothing$. Let $t_{\varnothing} \in \mathcal{T}_{\text{forced}}$ be the last forced round, and $j \in [m]$ the scarcest resource in round $t_{\varnothing}$. Formally: 
    \[
        t_{\varnothing} \coloneqq \max \mathcal{T}_{\text{forced}}, \qquad
        j \coloneqq \argmin_{i \in [m]} B_{t_{\varnothing}, i}. 
    \]
    Then the cumulative consumption of resource $j$ during the active rounds preceding $t_{\varnothing}$ can be bounded as:  
    \[
        \sum_{t \in \mathcal{T}_{\text{active}} : t < t_{\varnothing}} b_{t,j}(x_t) \geq \rho T - 2 + \beta |\mathcal{T}_{\text{forced}}|.
    \]
\end{lemma}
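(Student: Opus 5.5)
The proof is pure budget bookkeeping for resource $j$, split over active and forced rounds before $t_{\varnothing}$.

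\textbf{Step 1: unroll the budget.} The budget update (Lines~\ref{algLine:6} and~\ref{algLine:9}) is the same in both branches. Unrolling it from $\bm{B}_1 = T\bm{\rho}$ gives
\[
B_{t_{\varnothing},j} = \rho T - \sum_{t < t_{\varnothing}} b_{t,j}(x_t).
\]

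\textbf{Step 2: use the trigger at $t_{\varnothing}$.} Since $t_{\varnothing}$ is a forced round, the condition in Line~\ref{algLine:4} holds there. So some resource has budget below $1$, and $j$ is the argmin, hence $B_{t_{\varnothing},j} < 1$. Therefore
\[
\sum_{t < t_{\varnothing}} b_{t,j}(x_t) > \rho T - 1 .
\]

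\textbf{Step 3: split the sum.} Write the left-hand side as the sum over active rounds $t<t_{\varnothing}$ plus the sum over forced rounds $t<t_{\varnothing}$. In every forced round $x_t = \varnothing$, so by the void-action assumption $b_{t,j}(\varnothing) \le -\beta$. All forced rounds except $t_{\varnothing}$ itself come before $t_{\varnothing}$, because $t_{\varnothing}$ is the maximum of $\mathcal{T}_{\text{forced}}$. So the forced part is at most $-\beta(|\mathcal{T}_{\text{forced}}|-1)$. Rearranging,
\[
\sum_{t \in \mathcal{T}_{\text{active}}:\, t<t_{\varnothing}} b_{t,j}(x_t) > \rho T - 1 + \beta\bigl(|\mathcal{T}_{\text{forced}}|-1\bigr) \ge \rho T - 2 + \beta |\mathcal{T}_{\text{forced}}|,
\]
where the last inequality uses $\beta \le 1$. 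This is the claim.

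The only delicate points are the off-by-one terms. The forced round $t_{\varnothing}$ itself is not included in the sum, which costs one $\beta$, and $\beta \le 1$ absorbs it into the constant $-2$. One must also pick $j$ as the argmin so that the trigger condition applies to $j$ specifically.
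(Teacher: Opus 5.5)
Your proof is correct and matches the paper's argument step for step. You unroll the budget to get $\sum_{t<t_{\varnothing}} b_{t,j}(x_t) = B_{1,j}-B_{t_{\varnothing},j} > \rho T-1$, then remove the $|\mathcal{T}_{\text{forced}}|-1$ earlier forced rounds, each contributing at most $-\beta$, and absorb the leftover $\beta$ using $\beta\le 1$.
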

\begin{proof}
From the budget update rule, for any $t \in [T]$ we have $
    B_{t+1, j} = B_{t,j} - b_{t,j}(x_t). 
$
Rearranging and summing up to round $t_{\varnothing} - 1$, we get: 
\begin{align}
    \sum_{t \in [t_{\varnothing} - 1]} b_{t,j}(x_t) &= \sum_{t \in [t_{\varnothing} - 1]} (B_{t, j} - B_{t+1,j}) \nonumber \\ 
    &= B_{1,j} - B_{t_{\varnothing}, j} \nonumber \\  
    &> \rho T - 1, \label{ex}
\end{align}
where Inequality~\eqref{ex} follows from the initial budget $B_{1,j} = \rho T$ and the fact that $B_{t_{\varnothing}, j} < 1$ since round $t_{\varnothing}$ triggered a replenishment. 

Exploiting this result, we can write: 
\begin{align}
    \sum_{t \in \mathcal{T}_{\text{active}} : t < t_{\varnothing}} b_{t,j}(x_t) &= \sum_{t \in [t_{\varnothing} - 1]} b_{t,j}(x_t) - \sum_{t \in \mathcal{T}_{\text{forced}} \setminus \{t_{\varnothing}\}} b_{t,j}(x_t) \nonumber \\
    &\geq \rho T - 1  - \sum_{t \in \mathcal{T}_{\text{forced}} \setminus \{t_{\varnothing}\} } b_{t,j}(\varnothing) \label{pv} \\ 
    &\geq \rho T - 1 + \beta (|\mathcal{T}_{\text{forced}}| - 1) \label{var} \\ 
    &\geq \rho T - 2 + \beta |\mathcal{T}_{\text{forced}}|, \nonumber
\end{align}
where Inequality~\eqref{pv} follows from Inequality~\eqref{ex} and Inequality~\eqref{var} holds by definition of the void action $\varnothing$. The last step is due to the assumption $\beta \leq 1$. 
This concludes the proof.
\end{proof}

\begin{lemma} \label{b4}
    For each $t\in\mathcal{T}_{\text{active}}$, Algorithm~\ref{alg:alg1} with learning rate $\eta>0$ guarantees:
    \[\|\bm{\lambda}_{t+1}\|_2^2 \leq \|\bm{\lambda}_t\|_2^2 + 2\eta\bigl(1-(\rho+\beta)\|\bm{\lambda}_t\|_1\bigr)+4\eta^2m.\]
\end{lemma}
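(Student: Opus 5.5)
The plan is to expand the projected update and then control the cross term using the optimality of $x_t$ against the void action. Fix an active round $t$ and write $\bm{g}_t \coloneqq \bm{b}_t(x_t) - \bm{\rho}$.

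\emph{Step 1: remove the projection.} Projection onto the convex set $\mathbb{R}^m_{\geq 0}$ is non-expansive and $\bm{0}$ lies in this set. Applying the same computation as in the proof of Lemma~\ref{ogd} with $\bm{\mu} = \bm{0}$, we get
\[
\|\bm{\lambda}_{t+1}\|_2^2 \leq \|\bm{\lambda}_t + \eta \bm{g}_t\|_2^2 = \|\bm{\lambda}_t\|_2^2 + 2\eta\langle \bm{\lambda}_t, \bm{g}_t\rangle + \eta^2\|\bm{g}_t\|_2^2 .
\]

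\emph{Step 2: the quadratic term.} Each component satisfies $b_{t,i}(x_t) - \rho \in [-1-\rho, 1-\rho]$. Since $\rho < 1$, we have $|b_{t,i}(x_t)-\rho| \leq 2$, and therefore $\|\bm{g}_t\|_2^2 \leq 4m$. This yields the $4\eta^2 m$ term.

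\emph{Step 3: the cross term.} This is the only step that uses the structure of the problem. Because $t$ is active, $x_t$ maximizes $\mathcal{L}_{f_t,\bm{b}_t}(\cdot,\bm{\lambda}_t)$ over $\mathcal{X}$, and $\varnothing \in \mathcal{X}$. Hence
\[
f_t(x_t) - \langle \bm{\lambda}_t, \bm{g}_t\rangle \geq f_t(\varnothing) - \langle \bm{\lambda}_t, \bm{b}_t(\varnothing) - \bm{\rho}\rangle .
\]
We have $f_t(\varnothing) = 0$, and for every $i$, $b_{t,i}(\varnothing) - \rho \leq -(\rho+\beta)$. Since $\bm{\lambda}_t \geq \bm{0}$, the right-hand side is at least $(\rho+\beta)\sum_i \lambda_{t,i} = (\rho+\beta)\|\bm{\lambda}_t\|_1$.

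Nonnegativity of $\bm{\lambda}_t$ is essential here. Without it, the inequality $\langle \bm{\lambda}_t, \bm{b}_t(\varnothing)-\bm{\rho}\rangle \leq -(\rho+\beta)\|\bm{\lambda}_t\|_1$ would fail, and this is exactly the cross-resource compensation issue that the paper highlights. Nonnegativity does hold, since $\bm{\lambda}_1 = \bm{0}$, every update is a projection onto the nonnegative orthant, and forced rounds leave the multipliers unchanged.

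Using $f_t(x_t) \leq 1$, we obtain
\[
\langle \bm{\lambda}_t, \bm{g}_t\rangle \leq 1 - (\rho+\beta)\|\bm{\lambda}_t\|_1 .
\]

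\emph{Step 4: combine.} Substituting the bounds from Steps 2 and 3 into the expansion in Step 1 gives
\[
\|\bm{\lambda}_{t+1}\|_2^2 \leq \|\bm{\lambda}_t\|_2^2 + 2\eta\bigl(1-(\rho+\beta)\|\bm{\lambda}_t\|_1\bigr) + 4\eta^2 m,
\]
which is the claimed inequality for every $\eta > 0$.

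I expect no real obstacle. The only point needing care is Step 3: one must check that $\bm{\lambda}_t \geq \bm{0}$ and use the uniform-in-$\gamma$ replenishment property of $\varnothing$.
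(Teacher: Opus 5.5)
Your proof is correct and follows the paper's argument essentially step for step. You compare the Lagrangian-maximizing $x_t$ with the void action to get $\langle\bm{\lambda}_t,\bm{b}_t(x_t)-\bm{\rho}\rangle\leq 1-(\rho+\beta)\|\bm{\lambda}_t\|_1$, expand the projected update using non-expansiveness, and bound the quadratic term by $4\eta^2 m$. Your explicit check that $\bm{\lambda}_t\geq\bm{0}$, which the paper uses only implicitly, is a welcome addition.
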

\begin{proof}
    Since $x_t$ maximizes the Lagrangian, we have,
    \begin{align*}
        f_t(x_t)-\langle\bm{\lambda}_t,\bm{b}_t(x_t)-\bm{\rho}\rangle
        &\geq -\langle\bm{\lambda}_t,\bm{b}_t(\varnothing)-\bm{\rho}\rangle \\
        &\geq (\rho+\beta)\|\bm{\lambda}_t\|_1.
    \end{align*}
    Thus, since $f_t(x_t)\leq1$, we obtain:
    \[\langle\bm{\lambda}_t,\bm{b}_t(x_t)-\bm{\rho}\rangle\leq1-(\rho+\beta)\|\bm{\lambda}_t\|_1.\]
    By non-expansiveness of the projection and $\|\bm{b}_t(x_t)-\bm{\rho}\|_2^2\leq4m$, we have:
    \begin{align*}
        \|\bm{\lambda}_{t+1}\|_2^2
        &\leq\|\bm{\lambda}_t\|_2^2+2\eta\langle\bm{\lambda}_t,\bm{b}_t(x_t)-\bm{\rho}\rangle
        +\eta^2\|\bm{b}_t(x_t)-\bm{\rho}\|_2^2 \\
        &\leq\|\bm{\lambda}_t\|_2^2+2\eta\bigl(1-(\rho+\beta)\|\bm{\lambda}_t\|_1\bigr)+4\eta^2m.
    \end{align*}
    This concludes the proof.
\end{proof}

\begin{lemma} \label{aux2} 
    Let the dual variables be generated by Algorithm~\ref{alg:alg1} with $0<\eta\leq\nicefrac{1}{2m}$.
    It holds 
    \[
    \sum_{t \in \mathcal{T}_{\text{active}}} \langle \bm{\lambda}_t, \bm{b}_t(x_t) - \bm{\rho} \rangle \geq |\mathcal{T}_{\text{forced}}| - \frac{2}{\rho + \beta} - \frac{9 m}{\eta (\rho + \beta)^2} - 4 \eta T m.
    \]
\end{lemma}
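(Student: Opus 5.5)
The plan is to split the active rounds at the last forced round $t_{\varnothing}$ and apply Lemma~\ref{ogd} separately on each side, with different comparators. Before $t_{\varnothing}$, a comparator concentrated on the scarcest resource $j$ turns the budget exhaustion certified by Lemma~\ref{aux1} into the positive term $|\mathcal{T}_{\text{forced}}|$. After $t_{\varnothing}$, the comparator $\bm{0}$ together with the dual bound of Lemma~\ref{selfb} controls the remaining rounds.

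\emph{Trivial case.} If $\mathcal{T}_{\text{forced}}=\varnothing$, I apply Lemma~\ref{ogd} on $[1,T]$ with $\bm{\mu}=\bm{0}$. Since $\bm{\lambda}_1=\bm{0}$, this gives $\sum_{t\in\mathcal{T}_{\text{active}}}\langle\bm{\lambda}_t,\bm{b}_t(x_t)-\bm{\rho}\rangle\geq -2\eta Tm$. This is already stronger than the claim, because $|\mathcal{T}_{\text{forced}}|=0$.

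\emph{Before $t_{\varnothing}$.} Otherwise, let $t_{\varnothing}$ and $j$ be as in Lemma~\ref{aux1}. The inequality of Lemma~\ref{ogd} is deterministic for every fixed $\bm{\mu}$, so I may choose $\bm{\mu}$ after seeing the realized trajectory. On $[1,t_{\varnothing}-1]$ I take $\bm{\mu}=\frac{1}{\rho+\beta}\bm{e}_j$; with $\bm{\lambda}_1=\bm{0}$ this yields
\[
\sum_{t\in\mathcal{T}_{\text{active}},\,t<t_{\varnothing}}\langle\bm{\lambda}_t,\bm{b}_t(x_t)-\bm{\rho}\rangle\geq \frac{1}{\rho+\beta}\sum_{t\in\mathcal{T}_{\text{active}},\,t<t_{\varnothing}}(b_{t,j}(x_t)-\rho)-\frac{1}{2\eta(\rho+\beta)^2}-2\eta Tm .
\]
Lemma~\ref{aux1} bounds the consumption sum from below by $\rho T-2+\beta|\mathcal{T}_{\text{forced}}|$. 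The number of active rounds before $t_{\varnothing}$ is at most $(t_{\varnothing}-1)-(|\mathcal{T}_{\text{forced}}|-1)\leq T-|\mathcal{T}_{\text{forced}}|$. Hence the inner sum is at least $(\rho+\beta)|\mathcal{T}_{\text{forced}}|-2$, and this block contributes at least
\[
|\mathcal{T}_{\text{forced}}|-\frac{2}{\rho+\beta}-\frac{1}{2\eta(\rho+\beta)^2}-2\eta Tm .
\]

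\emph{After $t_{\varnothing}$.} On $[t_{\varnothing}+1,T]$ I apply Lemma~\ref{ogd} with $\bm{\mu}=\bm{0}$, which gives a lower bound of $-\|\bm{\lambda}_{t_{\varnothing}+1}\|_2^2/(2\eta)-2\eta Tm$. Because $\eta\leq\nicefrac{1}{2m}$, Lemma~\ref{selfb} applies. Using $\|\cdot\|_2\leq\|\cdot\|_1$, it gives $\|\bm{\lambda}_{t_{\varnothing}+1}\|_2^2\leq\frac{16m}{(\rho+\beta)^2}$, so this block contributes at least $-\frac{8m}{\eta(\rho+\beta)^2}-2\eta Tm$. (If $t_{\varnothing}=T$ the block is empty.)

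\emph{Combining.} Adding the two blocks gives total $\eta$-error $\frac{1/2+8m}{\eta(\rho+\beta)^2}\leq\frac{9m}{\eta(\rho+\beta)^2}$ and linear term $4\eta Tm$, which is exactly the statement.

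The main obstacle is the before-$t_{\varnothing}$ block: converting the budget certificate into $|\mathcal{T}_{\text{forced}}|$. It needs the comparator scaled by exactly $1/(\rho+\beta)$, and it needs the count of active rounds before $t_{\varnothing}$ bounded by $T-|\mathcal{T}_{\text{forced}}|$. That count is what makes the $-\rho T$ from $\bm{\rho}$ cancel the $\rho T$ from Lemma~\ref{aux1}, leaving $(\rho+\beta)|\mathcal{T}_{\text{forced}}|$.
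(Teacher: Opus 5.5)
Your proposal is correct and follows the paper's proof essentially step for step: you split at $t_{\varnothing}$, use the same comparators $\frac{1}{\rho+\beta}\bm{e}_j$ and $\bm{0}$, and invoke Lemma~\ref{aux1} and Lemma~\ref{selfb} in the same places. The only difference is cosmetic: you bound the number of active rounds before $t_{\varnothing}$ by $T-|\mathcal{T}_{\text{forced}}|$, whereas the paper decomposes $\rho T$ exactly over forced, pre-$t_{\varnothing}$ active and post-$t_{\varnothing}$ active rounds and then drops the nonnegative post-$t_{\varnothing}$ term, which yields the same inequality.
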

\begin{proof}
If $\mathcal{T}_{\text{forced}}=\varnothing$, Lemma~\ref{ogd} with $\bm{\mu}=\bm{0}$ and $\bm{\lambda}_1=\bm{0}$ gives a lower bound of $-2\eta Tm$, which implies the claim. Thus, assume $\mathcal{T}_{\text{forced}}\neq\varnothing$. We provide a lower bound on the term $\sum_{t \in \mathcal{T}_{\text{active}}} \langle \bm{\lambda}_t, \bm{b}_t(x_t) - \bm{\rho} \rangle$ employing Lemma~\ref{ogd} with different comparators for the rounds before and after $t_{\varnothing}$. Since the dual multipliers are not updated during forced rounds, Lemma~\ref{ogd} applies to the active rounds on either side of $t_{\varnothing}$. If either set is empty, the corresponding OGD bound holds trivially. 
\begin{itemize}
    \item ($t < t_{\varnothing}$) Let $j$ be the resource identified in Lemma~\ref{aux1}. We choose the comparator $\bm{\mu}$ such that $\mu_{j} = \frac{1}{\rho + \beta}$ and $\mu_{i} = 0$ for each $i \neq j$. We obtain:  
    \begin{align}
        \sum_{t \in \mathcal{T}_{\text{active}}: t < t_{\varnothing}} \langle \bm{\lambda}_t, \bm{b}_t(x_t) - \bm{\rho} \rangle 
        &\geq \sum_{t \in \mathcal{T}_{\text{active}}: t < t_{\varnothing}} \mu_j (b_{t,j}(x_t) - \rho) \nonumber\\
        &\quad- \frac{1}{2 \eta} \| \bm{\lambda}_1 - \bm{\mu}\|_2^2 - 2 \eta T m \nonumber
        \\&\geq \frac{1}{\rho + \beta} \left( \sum_{t \in \mathcal{T}_{\text{active}}: t < t_{\varnothing}} b_{t,j}(x_t) - \sum_{t \in \mathcal{T}_{\text{active}}: t < t_{\varnothing}} \rho \right) + \nonumber \\ &\mkern210mu  - \frac{1}{2\eta (\rho + \beta)^2} - 2 \eta T m \nonumber \\
        &\geq \frac{1}{\rho + \beta} \left( \rho T - 2 + \beta |\mathcal{T}_{\text{forced}}| - \sum_{t \in \mathcal{T}_{\text{active}}: t < t_{\varnothing}} \rho \right) + \nonumber \\ &\mkern210mu  - \frac{1}{2\eta (\rho + \beta)^2} - 2 \eta T m, 
        \label{a1} \\
        \intertext{where Inequality~\eqref{a1} follows from Lemma~\ref{aux1}. Notice that the total horizon budget $\rho T$ can be decomposed across the disjoint sets as $\rho T = \rho |\mathcal{T}_{\text{forced}}| + \sum_{t \in \mathcal{T}_{\text{active}}: t < t_{\varnothing}} \rho + \sum_{t \in \mathcal{T}_{\text{active}}: t > t_{\varnothing}} \rho $. Substituting this expansion we obtain:} 
        &= \frac{1}{\rho + \beta} \left( \rho |\mathcal{T}_{\text{forced}}| - 2 + \beta |\mathcal{T}_{\text{forced}}| + \sum_{t \in \mathcal{T}_{\text{active}}: t > t_{\varnothing}} \rho \right) + \nonumber \\ &\mkern210mu  - \frac{1}{2\eta (\rho + \beta)^2} - 2 \eta T m \nonumber \\
        &= \frac{\sum_{t \in \mathcal{T}_{\text{active}}: t > t_{\varnothing}} \rho}{\rho + \beta} - \frac{2}{\rho + \beta} + |\mathcal{T}_{\text{forced}}| \nonumber\\
        &\quad- \frac{1}{2\eta (\rho + \beta)^2} - 2 \eta T m \nonumber \\
        &\geq - \frac{2}{\rho + \beta} + |\mathcal{T}_{\text{forced}}| - \frac{1}{2\eta (\rho + \beta)^2} - 2 \eta T m, \label{ge0}   \end{align}
    where Inequality~\eqref{ge0} holds because $\sum \rho \geq 0$.
    
    \item ($t > t_{\varnothing}$) Since $t_{\varnothing}$ is the last round in which a replenishment is required, in the phase following $t_{\varnothing}$ the budget condition never forces the void action again. We choose the comparator $\bm{\mu} = \bm{0}$ to obtain: 
    \begin{align}
        \sum_{t \in \mathcal{T}_{\text{active}}: t > t_{\varnothing}} \langle \bm{\lambda}_t, \bm{b}_t(x_t) - \bm{\rho} \rangle 
        &\geq 0 - \frac{1}{2 \eta} \| \bm{\lambda}_{t_{\varnothing} + 1}\|_2^2 - 2 \eta T m \nonumber \\
        &\geq - \frac{1}{2 \eta} \left( \frac{4 \sqrt{m}}{\rho + \beta} \right)^2  - 2 \eta T m \label{sb} \\ 
        &\geq - \frac{16 m}{2 \eta (\rho + \beta)^2}  - 2 \eta T m, 
    \end{align} 
    where Inequality~\eqref{sb} holds by Lemma~\ref{selfb}. 
\end{itemize}

Putting everything together: 
\begin{align}
    \sum_{t \in \mathcal{T}_{\text{active}}} \langle \bm{\lambda}_t, \bm{b}_t(x_t) - \bm{\rho} \rangle
    &= \sum_{t \in \mathcal{T}_{\text{active}}: t < t_{\varnothing}} \langle \bm{\lambda}_t, \bm{b}_t(x_t) - \bm{\rho} \rangle \nonumber\\
    &\quad+\sum_{t \in \mathcal{T}_{\text{active}}: t > t_{\varnothing}} \langle \bm{\lambda}_t, \bm{b}_t(x_t) - \bm{\rho} \rangle \nonumber\\
    &\geq -\left( \frac{2}{\rho + \beta} - |\mathcal{T}_{\text{forced}}| + \frac{1}{2\eta (\rho + \beta)^2} + 2 \eta T m \right) \nonumber\\
    &\quad-\left(\frac{16 m}{2 \eta (\rho + \beta)^2} + 2 \eta T m\right) \nonumber\\
    &\geq |\mathcal{T}_{\text{forced}}| - \frac{2}{\rho + \beta} - \frac{9 m}{\eta (\rho + \beta)^2} - 4 \eta T m, \nonumber
\end{align}
which concludes the proof.
\end{proof}

\begingroup
\renewcommand{\thetheorem}{\getrefnumber{selfb}}
\selfb*
\endgroup
\begin{proof}
    We prove by induction that $\|\bm{\lambda}_t\|_2\leq\nicefrac{4}{\rho+\beta}$ for all $t\in[T+1]$. The claim holds at $t=1$ since $\bm{\lambda}_1=\bm{0}$. During forced rounds, the dual variables remain unchanged.

    Consider an active round $t$. If $\|\bm{\lambda}_t\|_1\geq\nicefrac{2}{\rho+\beta}$, Lemma~\ref{b4} gives:
    \[\|\bm{\lambda}_{t+1}\|_2^2\leq\|\bm{\lambda}_t\|_2^2-2\eta+4\eta^2m\leq\|\bm{\lambda}_t\|_2^2,\]
    where the last inequality follows from $\eta\leq\nicefrac{1}{2m}$.
    Otherwise, by non-expansiveness of the projection, we obtain:
    \begin{align*}
        \|\bm{\lambda}_{t+1}\|_2
        &\leq\|\bm{\lambda}_t\|_2+\eta\|\bm{b}_t(x_t)-\bm{\rho}\|_2 \\
        &\leq\|\bm{\lambda}_t\|_1+2\eta\sqrt{m} \\
        &\leq\frac{2}{\rho+\beta}+\frac{1}{\sqrt{m}}
        \\&\leq\frac{4}{\rho+\beta},
    \end{align*}
    since $\rho+\beta\leq2$ and $m\geq1$. Thus, the induction is complete. Finally,
    \[\|\bm{\lambda}_t\|_1\leq\sqrt{m}\|\bm{\lambda}_t\|_2\leq\frac{4\sqrt{m}}{\rho+\beta}
    ,\]
    which concludes the proof.
\end{proof}

\stre*
\begin{proof}
    First, consider the case $\eta>\nicefrac{1}{2m}$. If $\beta$ is known, the prescribed learning rate implies $\sqrt{T}<\nicefrac{3 m}{\rho + \beta}$. Thus,
    \[R_T\leq T<\frac{3 m\sqrt{T}}{\rho+\beta}\leq \frac{2 + 12 m \sqrt{T}}{\rho + \beta}.\]
    If $\beta$ is unknown, we similarly have $\sqrt{T}<\nicefrac{3 m}{\rho}$ and
    \[R_T\leq T<\frac{3 m\sqrt{T}}{\rho}\leq \frac{2}{\rho + \beta} + 6m\left( \frac{\rho}{(\rho + \beta)^2} + \frac{1}{\rho} \right) \sqrt{T}.\]
    Therefore, both claims hold in this case. For the remainder of the proof, assume $0<\eta\leq\nicefrac{1}{2m}$.

    First, we provide an upper bound on the hindsight stochastic optimum. We define the expected Lagrangian evaluated with respect to the input distribution as:  
    \[\bar{\mathcal{L}}_{f,\bm{b}}(\bm{\lambda} | \mathcal{P}) := \mathbb{E}_{(f, \bm{b}) \sim \mathcal{P}} \left[\max_{x \in \mathcal{X}} \mathcal{L}_{f,\bm{b}}(x, \bm{\lambda}) \right].\]  
    By weak duality, for every fixed $\bm{\lambda}\in\mathbb{R}^m_{\geq0}$ and every input sequence $\bm{\gamma}$, we have:
    \[\text{OPT}(\bm{\gamma})\leq\sum_{t\in[T]}\max_{x\in\mathcal{X}}\mathcal{L}_{f_t,\bm{b}_t}(x,\bm{\lambda}).\]
    Taking expectations gives $\text{OPT}_{\text{stoc}}\leq T\bar{\mathcal{L}}_{f,\bm{b}}(\bm{\lambda}\mid\mathcal{P})$ for every $\bm{\lambda}\geq\bm{0}$. Notice that this is an inequality between deterministic functions of $\bm{\lambda}$. Thus, it also holds pointwise when evaluated at a random dual vector.
    Let $\bm{\bar{\lambda}}_{\mathcal{T}_{\text{active}}}:=\frac{1}{|\mathcal{T}_{\text{active}}|}\sum_{t\in\mathcal{T}_{\text{active}}}\bm{\lambda}_t$ when $\mathcal{T}_{\text{active}}\neq\varnothing$, and let $\bm{\bar{\lambda}}_{\mathcal{T}_{\text{active}}}=\bm{0}$ otherwise. Then,
    \begin{align}
        \text{OPT}_{\text{stoc}}
        &=\mathbb{E}_{\bm{\gamma}}\left[\frac{|\mathcal{T}_{\text{active}}|}{T}\text{OPT}_{\text{stoc}}\right]
        +\mathbb{E}_{\bm{\gamma}}\left[\frac{|\mathcal{T}_{\text{forced}}|}{T}\text{OPT}_{\text{stoc}}\right]\nonumber\\
        &\leq\mathbb{E}_{\bm{\gamma}}\left[|\mathcal{T}_{\text{active}}|\bar{\mathcal{L}}_{f,\bm{b}}(\bm{\bar{\lambda}}_{\mathcal{T}_{\text{active}}}\mid\mathcal{P})\right]
        +\mathbb{E}_{\bm{\gamma}}[|\mathcal{T}_{\text{forced}}|],\label{opt_bound}
    \end{align}
    where we used that rewards are in $[0,1]$. 
    
     During active rounds, Algorithm~\ref{alg:alg1} maximizes the instantaneous Lagrangian. Therefore, we can write the instantaneous reward as: 
    \[f_t(x_t) = \mathcal{L}_{f_t,\bm{b}_t}(x_t, \bm{\lambda}_t) + \langle \bm{\lambda}_t, \bm{b}_t(x_t) - \bm{\rho} \rangle.\] 

    Let $\Gamma_t = (\gamma_1, ..., \gamma_t)$ denote the history of observed inputs up to time $t$. Since $\bm{\lambda}_t$ and $\mathbb{I} \{t \in \mathcal{T}_{\text{active}}\}$ are $\Gamma_{t-1}$-measurable, taking the conditional expectation during active rounds we obtain: 
    \begin{align}
    \mathbb{E} [f_t(x_t)\cdot\mathbb{I}\{t\in\mathcal{T}_{\text{active}}\}\mid\Gamma_{t-1}]\nonumber 
    &=\mathbb{I}\{t\in\mathcal{T}_{\text{active}}\}\mathbb{E}[\mathcal{L}_{f_t,\bm{b}_t} (x_t,\bm{\lambda}_t)\mid\Gamma_{t-1}] + \nonumber\\
    &\qquad \qquad \qquad+\mathbb{E}[\mathbb{I}\{t\in\mathcal{T}_{\text{active}}\}\langle\bm{\lambda}_t,\bm{b}_t(x_t)-\bm{\rho}\rangle\mid\Gamma_{t-1}]\nonumber\\
    &=\mathbb{I}\{t\in\mathcal{T}_{\text{active}}\}\bar{\mathcal{L}}_{f,\bm{b}}(\bm{\lambda}_t\mid\mathcal{P})\nonumber + \\
    &\qquad \qquad \qquad+\mathbb{E}[\mathbb{I}\{t\in\mathcal{T}_{\text{active}}\}\langle\bm{\lambda}_t,\bm{b}_t(x_t)-\bm{\rho}\rangle\mid\Gamma_{t-1}].\label{ma}
    \end{align}   
    Thus, summing Equation~\eqref{ma} over $t$ and taking the expectation on both sides, we have: 
    \begin{align*}
        \sum_{t \in [T]} \mathbb{E}\left[ \mathbb{E} [f_t(x_t) \cdot \mathbb{I} \{t \in \mathcal{T}_{\text{active}}\} | \Gamma_{t-1}]\right] 
        &= \sum_{t \in [T]} \mathbb{E} [ \mathbb{I} \{t \in \mathcal{T}_{\text{active}}\} \bar{\mathcal{L}}_{f,\bm{b}}(\bm{\lambda}_t | \mathcal{P}) \nonumber \\ &\mkern130mu  + \mathbb{E}  [ \mathbb{I} \{t \in \mathcal{T}_{\text{active}}\} \langle \bm{\lambda}_t, \bm{b}_t(x_t) - \bm{\rho} \rangle | \Gamma_{t-1}] ]    \end{align*}
    By Tower Rule ($\mathbb{E}[\mathbb{E}[\cdot | \Gamma_{t-1}]] = \mathbb{E}[\cdot]$) and definition of $\mathbb{I} \{t \in \mathcal{T}_{\text{active}}\}$ we obtain: 
    \[\mathbb{E}\left[\sum_{t \in \mathcal{T}_{\text{active}}} f_t(x_t) \right] = \mathbb{E}\left[ \sum_{t \in \mathcal{T}_{\text{active}}}  \bar{\mathcal{L}}_{f,\bm{b}}(\bm{\lambda}_t | \mathcal{P})\right] + \mathbb{E}\left[ \sum_{t \in \mathcal{T}_{\text{active}}} \langle \bm{\lambda}_t, \bm{b}_t(x_t) - \bm{\rho} \rangle \right] .
    \]
    By convexity of the expected Lagrangian, with both sides of Jensen's inequality equal to zero when there are no active rounds, we can write:
    \begin{align}
    \mathbb{E}\left[ \sum_{t \in \mathcal{T}_{\text{active}}}f_t(x_t) \right] &= \mathbb{E}\left[\sum_{t \in \mathcal{T}_{\text{active}}} \bar{\mathcal{L}}_{f,\bm{b}}(\bm{\lambda}_t | \mathcal{P})\right] + \mathbb{E}\left[ \sum_{t \in \mathcal{T}_{\text{active}}} \langle \bm{\lambda}_t, \bm{b}_t(x_t) - \bm{\rho} \rangle \right] \nonumber \\
    &\geq \mathbb{E}\left[ |\mathcal{T}_{\text{active}}| \bar{\mathcal{L}}_{f,\bm{b}}(\bm{\bar{\lambda}}_{\mathcal{T}_{\text{active}}} | \mathcal{P})\right] + \mathbb{E}\left[ \sum_{t \in \mathcal{T}_{\text{active}}} \langle \bm{\lambda}_t, \bm{b}_t(x_t) - \bm{\rho} \rangle \right]. \nonumber
    \end{align}
    Combining this result with Inequality~\eqref{opt_bound}, we can bound the regret as:   
    \begin{align}
    R_T &= \text{OPT}_{stoc} - \mathbb{E}_{\bm{\gamma}}\left[\sum_{t \in \mathcal{T}_{\text{active}}} f_t(x_t) \right] \nonumber \\
    & \leq \mathbb{E}_{\bm{\gamma}} \left[ |\mathcal{T}_{\text{active}}| \cdot \bar{\mathcal{L}}_{f,\bm{b}}(\bm{\bar{\lambda}}_{\mathcal{T}_{\text{active}}} | \mathcal{P}) \right] + \mathbb{E}_{\bm{\gamma}} [ |\mathcal{T}_{\text{forced}}| ] \nonumber\\
    &\quad- \mathbb{E}\left[ |\mathcal{T}_{\text{active}}| \bar{\mathcal{L}}_{f,\bm{b}}(\bm{\bar{\lambda}}_{\mathcal{T}_{\text{active}}} | \mathcal{P})\right] \nonumber\\
    &\quad- \mathbb{E}\left[ \sum_{t \in \mathcal{T}_{\text{active}}} \langle \bm{\lambda}_t, \bm{b}_t(x_t) - \bm{\rho} \rangle \right] \nonumber\\
    & \leq \mathbb{E}_{\bm{\gamma}} [ |\mathcal{T}_{\text{forced}}| ] - \mathbb{E}\left[ \sum_{t \in \mathcal{T}_{\text{active}}} \langle \bm{\lambda}_t, \bm{b}_t(x_t) - \bm{\rho} \rangle \right] \nonumber \\
    &\leq \mathbb{E}_{\bm{\gamma}} [ |\mathcal{T}_{\text{forced}}| ] + \mathbb{E}\left[  - |\mathcal{T}_{\text{forced}}| + \frac{2}{\rho + \beta} + \frac{9 m}{\eta (\rho + \beta)^2} + 4 \eta T m \right] \label{rb} \\ 
    &\leq \frac{2}{\rho + \beta} + \frac{9 m}{\eta (\rho + \beta)^2} + 4 \eta T m. \nonumber
    \end{align} 
    where Inequality~\eqref{rb} holds by Lemma~\ref{aux2}. 
    To obtain the stated bounds, we tune the learning rate $\eta$ depending on whether the replenishment parameter $\beta$ is known to the algorithm: 
    \begin{itemize}
        \item Setting 1 ($\beta$ is known): $\eta \coloneqq \frac{3}{2 (\rho + \beta)} \sqrt{\frac{1}{T}} $. The regret is bounded by: 
            \begin{align*}
                R_T &\leq \frac{2}{\rho + \beta} + \frac{9 m}{\eta (\rho + \beta)^2} + 4 \eta T m \\
                &\leq \frac{2 + 12 m \sqrt{T}}{\rho + \beta}.
            \end{align*} 
        \item Setting 2 ($\beta$ is unknown): If the algorithm does not know the value of $\beta$, the Slater's margin can be lower bounded by $\rho$, since $\beta \geq 0$. Consequently, we can set $\eta \coloneqq \frac{3}{2 (\rho)} \frac{1}{\sqrt{T}}$. The regret is bounded by: 
            \begin{align*}
                R_T &\leq \frac{2}{\rho + \beta} + \frac{9 m}{\eta (\rho + \beta)^2} + 4 \eta T m \\ 
                &\leq \frac{2}{\rho + \beta} + \frac{6 m \rho}{(\rho + \beta)^2} \sqrt{T} + \frac{6 m}{\rho} \sqrt{T} \\
                &= \frac{2}{\rho + \beta} + 6m \left( \frac{\rho}{(\rho + \beta)^2} + \frac{1}{\rho} \right) \sqrt{T}.
            \end{align*}         
    \end{itemize}
    This concludes the proof.
\end{proof}

\advr*
\begin{proof}
    If $\eta>\nicefrac{1}{2m}$, both claims follow from $\alpha\text{-}R_T\leq T$ by the same argument as in the proof of Theorem~\ref{stre}. Thus, assume $0<\eta\leq\nicefrac{1}{2m}$.
    Let $(x_t^*)_{t=1}^T$ be the optimal sequence of actions and let 
    $\alpha := \frac{\rho + \beta}{1 + \beta}$.
    For each round $t \in [T]$, we define $\xi_t^\diamond$ as the strategy that plays $x_t^*$ with probability $\alpha$ and $\varnothing$ with probability $1 - \alpha$.
    By construction, for each $i \in [m]$, it holds: 
    \begin{align}
        \mathbb{E}_{x\sim \xi_t^\diamond}[b_{t,i}(x)] = \alpha \cdot b_{t,i}(x_t^*) + (1 - \alpha) b_{t,i}(\varnothing) \leq \alpha \cdot 1 + (1 - \alpha)(-\beta) = \rho.\label{alpha_b}
    \end{align}
    Moreover, since rewards are non-negative, it holds: 
    \begin{equation} \mathbb{E}_{x\sim \xi_t^\diamond}[f_t(x)] = \alpha \cdot f_t(x_t^*) + (1 - \alpha) f_t(\varnothing) \geq \alpha \cdot f_t(x_t^*). \label{alpha_f} \end{equation} 
Exploiting these properties, we can bound the $\alpha$-regret as: 
\begin{align}
    \alpha\text{-}R_T &=\sum_{t \in [T]} ( \alpha  f_t(x_t^*) - f_t(x_t)) \nonumber \\ 
    &\leq \sum_{t \in [T]} (\mathbb{E}_{x\sim\xi_t^\diamond}[f_t(x)] -f_t(x_t) ) \label{exp} \\ 
    &= \sum_{t \in \mathcal{T}_{\text{active}}} (\mathbb{E}_{x\sim\xi_t^\diamond}[f_t(x)] -f_t(x_t) ) + \sum_{t \in \mathcal{T}_{\text{forced}}} (\mathbb{E}_{x\sim\xi_t^\diamond}[f_t(x)] -f_t(\varnothing) ) \nonumber \\ 
    &\leq \sum_{t \in \mathcal{T}_{\text{active}}} (\mathbb{E}_{x\sim\xi_t^\diamond}[f_t(x)] -f_t(x_t) ) + |\mathcal{T}_{\text{forced}}| \nonumber \\
    &\leq \sum_{t \in  \mathcal{T}_{\text{active}}} \left(\langle \bm{\lambda}_t, \mathbb{E}_{x\sim\xi_{t}^\diamond}[\bm{b}_t(x) -\bm{\rho}] - (\bm{b}_t(x_t) - \bm{\rho}) \rangle\right) + |\mathcal{T}_{\text{forced}}| \label{reg_dec} \\
    &\leq - \sum_{t \in  \mathcal{T}_{\text{active}}} \langle \bm{\lambda}_t, \bm{b}_t(x_t) -\bm{\rho} \rangle + |\mathcal{T}_{\text{forced}}|  \label{optA} \\ 
    &\leq - |\mathcal{T}_{\text{forced}}| + \frac{2}{\rho + \beta} + \frac{9 m}{\eta (\rho + \beta)^2} + 4 \eta T m + |\mathcal{T}_{\text{forced}}| \label{a2} \\
    &= \frac{2}{\rho + \beta} + \frac{9 m}{\eta (\rho + \beta)^2} + 4 \eta T m, \nonumber
    \end{align}
    where Inequality~\eqref{exp} follows from Inequality~\eqref{alpha_f}, Inequality~\eqref{reg_dec} holds since $x_t$ deterministically maximizes the instantaneous Lagrangian over $\mathcal{X}$, Inequality~\eqref{optA} follows from Inequality~\eqref{alpha_b} and Inequality~\eqref{a2} holds by Lemma~\ref{aux2}. 
    The proof is concluded by tuning $\eta$ according to the reasoning in Theorem~\ref{stre}.
\end{proof}

\end{document}